
\documentclass[final,3p,fleqn,12pt]{elsarticle}

\makeatletter
\def\ps@pprintTitle{%
 \let\@oddhead\@empty
 \let\@evenhead\@empty
 \def\@oddfoot{}%
 \let\@evenfoot\@oddfoot}
\makeatother



\usepackage{amssymb}
\usepackage{multicol}
\usepackage{amsmath}
\usepackage{upgreek}




\usepackage{amsmath}
\usepackage{setspace}
\usepackage{amsfonts}
\usepackage{amsthm}
\usepackage{bbm}
\usepackage{mathtools}
\usepackage{subfig}
\usepackage{lineno}
\usepackage{algorithm}
\usepackage{algpseudocode}
\usepackage{hyperref}
\usepackage{float}
\usepackage{array,multirow}
\usepackage{color}
\usepackage{tikz}
\usepackage{graphicx}
\usepackage[flushleft]{threeparttable}
\usepackage{blindtext}
\usepackage[colorinlistoftodos,prependcaption,textsize=tiny]{todonotes}
\usepackage{regexpatch}
\usepackage[export]{adjustbox}
\usepackage{rotating}
\usepackage{amssymb,mleftright}
\usepackage{bm}
\usepackage{caption}
\usepackage{subfig}
\usepackage{booktabs}
\usepackage{lscape}

\definecolor{lightgray}{gray}{0.80}


\usetikzlibrary{decorations.pathreplacing}
\usepackage[listings,theorems,skins,breakable]{tcolorbox}
\tcbset{skin=enhanced}
\newtcolorbox{lbracebox}[1][Word]{%
   frame hidden,enlarge left by=2cm,width=\linewidth-2cm,%
  overlay unbroken = {\draw [decorate,decoration={brace,amplitude=10pt},]%
                     (frame.south west)-- (frame.north west)
                    node [black,midway,left,xshift=-.6cm] {#1};},%
}

\usepackage{scalerel,amssymb}



\makeatletter
\xpatchcmd{\@todo}{\setkeys{todonotes}{#1}}{\setkeys{todonotes}{inline,#1}}{}{}
\makeatother

\theoremstyle{plain}
\newtheorem{theorem}{Theorem}[section]

\newtheorem{lemma}[theorem]{Lemma}

\newtheorem{remark}[theorem]{Remark}

\theoremstyle{definition}

\newcommand{\mat}[1]{\mathbf{#1}}


\usepackage{pifont}
\usepackage{xcolor}
\newcommand{\cmark}{\textcolor{green!80!black}{\ding{51}}}
\newcommand{\xmark}{\textcolor{red}{\ding{55}}}

\begin{document}

\begin{frontmatter}



\title{An Entropy-Stable/Double-Flux scheme for the multi-component compressible Navier-Stokes equations}

\author[TUD,TUD1]{Vahid Badrkhani}
\ead{badrkhani@stfs.tu-darmstadt.de}

\author[TUD]{T. Jeremy P. Karpowski}
\ead{karpowski@stfs.tu-darmstadt.de}

\corref{cor1}
\author[TUD]{Christian Hasse}
\ead{hasse@stfs.tu-darmstadt.de }

\cortext[cor1]{Corresponding author}

\address[TUD]{Institute for Simulation of Reactive Thermo-Fluid Systems, Technical University of Darmstadt, Germany}
\address[TUD1]{Institute for Mechanics, Computational Mechanics Group, Technical University of Darmstadt, Germany}

\begin{abstract} 
We present a novel combination of numerical techniques to improve the efficiency, accuracy, and robustness of multi-component compressible flow simulations. At the core of our approach is an Entropy-Stable formulation that preserves kinetic energy and integrates a Double-Flux scheme tailored for multi-component flows with variable specific heat ratios. This formulation yields low-dissipation, oscillation-free solutions and enhances stability compared to standard fully conservative methods. To further improve robustness, we introduce a new hybrid dissipation strategy that blends the Entropy-Stable/Double-Flux approach with conventional dissipation mechanisms. We provide a rigorous proof that the resulting numerical flux satisfies a semi-discrete entropy inequality, ensuring consistency with the second law of thermodynamics. For time integration, we employ an explicit Runge–Kutta scheme in combination with adaptive mesh refinement to capture local flow features dynamically. The method is implemented within an existing compressible Navier–Stokes solver based on OpenFOAM. Benchmark cases, including multi-dimensional interface and shock-interface interactions, demonstrate the effectiveness of the proposed framework. The results confirm its favorable stability and robustness, validating the approach as a promising advancement for high-fidelity simulations of supersonic flows.
\end{abstract}

\begin{highlights}
\item We introduce an Entropy-Stable/Double-Flux approach for solving multi-component compressible Navier–Stokes equations.

\item To enhance stability, we have coupled an Entropy-Stable approach with the Double-Flux method. With a suitable choice of numerical flux, the scheme ensures entropy stability and satisfies the second law of thermodynamics in an integral sense.

\item Compared to conventional numerical fluxes, the Entropy-Stable/Double-Flux schemes offer superior stability, robustness, and oscillation-free behavior for multi-component compressible Navier–Stokes simulations.
\item We have developed a novel hybrid dissipation method by blending a standard hybrid dissipation term with the Entropy-Stable/Double-Flux algorithm.

\item The proposed numerical flux is rigorously verified and validated through extensive multi-dimensional benchmark tests and complex multi-component compressible flow problems.

\end{highlights}

\begin{keyword}
	Entropy-Stable schemes, Double-Flux method, multi-component compressible flow, finite-volume methods, supersonic flow simulation.
\end{keyword}

\end{frontmatter}



\section{Introduction \label{sec:introduction}}

The accurate and robust simulation of multi-component compressible flows with material interfaces is crucial for numerous engineering applications, such as combustion in propulsion systems and explosive detonation products, as well as scientific studies involving flow instabilities, chemical reactions, and phase transitions. These complex flows often feature nonlinear wave interactions, including shock and rarefaction waves, contact discontinuities, and material interfaces between different fluids. To capture these intricate phenomena, interface-capturing methods have been extensively studied. Various numerical approaches have been proposed, including finite difference methods \cite{latini2007effects,kawai2011high,movahed2013solution}, finite volume schemes \cite{vakilipour2019developing,houim2011low,johnsen2006implementation,xiong2012weno}, and discontinuous Galerkin methods \cite{renac2021entropy,lv2014discontinuous,de2015new,gaburro2024discontinuous}, among others.

Tadmor’s pioneering work on Entropy-Conservative (EC) numerical fluxes \cite{tadmor1987numerical,tadmor2003entropy} laid the foundation for developing schemes that ensure compliance with the entropy condition, thereby aligning with the second law of thermodynamics. These EC fluxes, when combined with dissipation terms using entropy variables, result in numerical schemes that inherently produce entropy correctly. Chandrasekhar \cite{chandrashekar2013kinetic} extended Tadmor’s concept to the compressible Euler equations, introducing an EC flux that also preserves kinetic energy in the sense articulated by Jameson \cite{jameson2008formulation}. This kinetic energy preservation is particularly beneficial in turbulent flow simulations, as highlighted by Subbareddy et al. \cite{subbareddy2009fully}. Building on these ideas, more recent works by Coppola et al. \cite{coppola2019numerically} and Kuya et al. \cite{kuya2018kinetic} have proposed schemes that effectively balance both kinetic energy and entropy conservation, collectively referred to as kinetic energy–and entropy-preserving schemes.

For multi-component compressible Euler equations, Entropy-Stable (ES) schemes have been formulated to respect the entropy condition \cite{gouasmi2020formulation}. However, these schemes can suffer from pressure oscillations in moving interface configurations \cite{gouasmi2020formulation} and may produce nonphysical negative densities or pressures, even at first order, in the next time step \cite{abgrall2001computations}. Numerical experiments have shown that while ES schemes handle shocks and stationary contact discontinuities adequately, they struggle to maintain pressure equilibrium and constant velocity in simulations involving moving interfaces \cite{karni1994multicomponent}. A potential solution lies in the EC/ES schemes for non-conservative hyperbolic systems developed by Castro et al. \cite{castro2013entropy}, although non-conservative approaches come with their own set of challenges \cite{hou1994nonconservative,abgrall2010comment}.

Matheis and Hickel \cite{matheis2018multi} conducted a comparative analysis between a fully conservative (FC) scheme and a quasi-conservative (QC) approach. Their findings revealed significantly higher temperature fields in the QC results compared to the FC scheme. Schmitt et al. \cite{schmitt2010large,schmitt2020large} implemented an energy correction procedure alongside an artificial viscosity method to mitigate nonphysical noise caused by steep density gradients. Terashima and Koshi \cite{terashima2012approach} further refined the approach by substituting the energy conservation equation with a pressure transport equation, successfully conserving pressure equilibrium across contact interfaces. 

Thermodynamic properties in multi-component flows are not constant but depend on temperature and species composition. Differences in the specific heat ratio across the material interface, induce spurious pressure oscillations \cite{lv2014discontinuous}. Noteworthy approach for this challange is the Double-Flux model, originally developed for calorically perfect gases by Abgrall and Karni \cite{abgrall2001computations} and later extended by Billet and Abgrall \cite{billet2003adaptive} for reacting flows. This method, which has been adapted for high-order schemes \cite{houim2011low}, has demonstrated the ability to accurately predict shock speeds, even for very strong shock waves. Ma et al. \cite{ma2017entropy,ma2019numerical} extended the Double-Flux method to real-gas flows and introduced an algorithm to enforce entropy stability in the numerical scheme through flux correction. However, this type of flux is not inherently Entropy-Stable, requiring entropy verification at each time step, followed by correction if necessary \cite{tadmor1986minimum,zhang2012minimum}.  

Entropy-Stable schemes satisfy the second law of thermodynamics and effectively dampen numerical oscillations, yet they introduce spurious oscillations at material interfaces with disparate specific heat ratios. Conversely, the Double-Flux method eliminates such interfacial oscillations in multi-component flows but does not ensure thermodynamic consistency and kinetic energy preservation, rendering it unsuitable for high-speed compressible turbulent flows. Bridging these limitations — while maintaining numerical stability, thermodynamic consistency, and interfacial accuracy — remains an unresolved challenge in computational fluid dynamics.

We propose a hybrid discretization strategy that unifies Entropy-Stable and Double-Flux schemes to advance the simulation of the multi-component compressible Navier–Stokes equations. The framework ensures kinetic energy preservation, thermodynamic consistency, and robust suppression of spurious pressure oscillations at material interfaces.
One key novelty is the evolution of auxiliary variables—specifically, the specific heat ratio and total enthalpy—which effectively dampens pressure artifacts typical of Entropy-Stable formulations while maintaining their theoretical rigor. This modification addresses a critical limitation in existing schemes and enhances overall numerical stability.
To contextualize our contribution, Table~\ref{Tab:introduction} compares relevant properties—kinetic energy conservation, thermodynamic compliance, and interface robustness—across established and proposed approaches, illustrating how our method resolves key trade-offs.
The proposed framework is validated through theoretical analysis and a range of benchmark test cases, demonstrating its accuracy, stability, and effectiveness in complex multi-component flow scenarios.

For clarity, we briefly distinguish between Entropy-Conservative and Entropy-Stable schemes. Entropy-Conservative methods [e.g., Tadmor, 1987] satisfy a discrete analog of thermodynamic entropy conservation for smooth solutions but may fail to control entropy growth at discontinuities. Entropy-Stable schemes, in contrast, enforce a stricter inequality ($dS/dt \geq 0$) to ensure entropy production aligns with the second law of thermodynamics, even for discontinuous flows. While both frameworks preserve fundamental physics, Entropy-Stable methods are inherently dissipative, enabling robustness in under-resolved or shocked regions.

The remainder of this paper is structured as follows: Section \ref{S2} introduces the multi-component compressible Navier-Stokes equations and discusses entropy pairs, symmetrization, and the mapping of entropy variables to conservative variables. In Section \ref{sec:Numerical methods}, we present the governing differential equations and describe the spatial and temporal discretizations used in this study, employing the finite volume method. This section focuses on the Entropy-Conservative/Stable, kinetic energy-preserving, and Double-Flux methods for simulating multi-component flows. In Section \ref{sec:Entropy-Stable/Double-Flux scheme}, we derive an Entropy-Stable/Double-Flux method, as well as a novel combined approach that accurately simulates multi-component flows with shock waves while minimizing oscillations and dissipation. Also, we mathematically proof that this method is Entropy-Stable. Section \ref{sec:Implementation aspects} presents the solver implementation used in this study, along with a discussion on adaptive mesh refinement and the computational setup. Section \ref{sec:Numerical results} presents the results of numerical experiments conducted using our Entropy-Stable/Double-Flux methods, compared with the standard and Double-Flux methods. These experiments include various test cases, ranging from one-dimensional to three-dimensional flow scenarios. Finally, the last section provides concluding remarks.

 \begin{table}
\caption{Table compares different models for compressible flow. The column "KE preserving" indicates whether the numerical flux follows kinetic energy-preserving schemes. The "Law of thermodynamics" column specifies whether the model satisfies the second law of thermodynamics. The column titled "Oscillation-free" indicates that the method is free of oscillations at the multi-component flow near the material interface. The symbol \small{$\color{orange!95!black}\bigstar$} denotes the existence of an algorithm designed to enforce entropy stability through flux correction.} 
\centering
\begin{tabular*}{\textwidth}{l@{\extracolsep{\fill}}lccc} 
\toprule
Model 																  									& \small{KE preserving} 							&\small{2nd Law of thermodynamics} 		& \small{Oscillation-free	}		 	\\\midrule
Tadmor \cite{tadmor1987numerical,tadmor2003entropy}			             &\xmark		    											&  \cmark												&\xmark   								 								 \\
Jameson \cite{jameson2008formulation}		      									&\cmark		    											&  \xmark												&\xmark   																 \\
Chandrashekar \cite{chandrashekar2013kinetic} 									&\cmark		    											&  \cmark												&\xmark   								 								 \\
Coppola et al. \cite{coppola2019numerically}									&\cmark		    											&  \cmark												&\xmark \\
Gouasmi et al. \cite{gouasmi2020formulation} 	 									&\cmark		    											&  \cmark												&\xmark   								 								 \\
 Abgrall and Karni. \cite{abgrall2001computations}															&\xmark		    											&  \xmark												&\cmark   								 								 \\
Ma et al. \cite{ma2017entropy} 			           	 										&\xmark			    										& \small{$\color{orange!95!black}\bigstar$}												&\cmark   								   								 \\
Current	 work         												 										&\cmark				   									& \cmark												&\cmark	   								  				 			 \\\hline
\end{tabular*}
\label{Tab:introduction}%
\end{table}

\section{Governing equations\label{sec:Governing equations}}\label{S2}
The time-dependent, compressible Navier-Stokes equations  for a multi-component gas mixture containing $n$ chemical species can be written in the conservation law of Equation \(\ref{C4}\)

\begin{equation}\label{C4}
    \frac{\partial \mathbf{\mat {u}}}{\partial t} + \nabla \cdot \mathbf{F}(\mathbf{\mat {u}}) -\nabla \cdot \mathbf{G}(\mathbf{\mat {u}}, \nabla \mathbf{\mat {u}}) = \mathbf{S},
\end{equation}
with 
 \begin{align}\label{C5}
 \mathbf{u} = \begin{bmatrix} \rho Y_1 \\ \vdots \\ \rho Y_{n-1} \\ \rho \\ \rho \mathbf{v} \\ \rho E  \end{bmatrix},
  \mathbf{F} = \begin{bmatrix}  \rho Y_1 \mathbf{v} \\ \vdots \\ \rho Y_{n-1} \mathbf{v} \\ \rho \mathbf{v} \\ \rho \mathbf{v} \otimes \mathbf{v} + p \mathbf{I} \\ (\rho E + p) \mathbf{v}  \end{bmatrix},
      \mathbf{G} = \begin{bmatrix} J_1 \\ \vdots \\ J_{n-1} \\0 \\ \mathbf{\tau} \\ \mathbf{\tau} \cdot \mathbf{v} - \mathbf{q} - \sum_{i=1}^n h_i J_i \end{bmatrix},
    \mathbf{S} = \begin{bmatrix} \dot{\omega}_1 \\ \vdots \\ \dot{\omega}_{n-1} \\0 \\ 0 \\ 0  \end{bmatrix}
\end{align}
where  \(\mat {F}(\mat u)\), \(\mathbf{G}(\mathbf{\mat {u}}, \nabla \mathbf{\mat {u}})\) and $\mathbf{S}$ are the inviscid, viscous flux functions and source term vector, respectively. In this context, the conservative state vector with the $n_s$-dimensional ($n_s=d+n+1$, with $d$ the number of geometrical dimensions) is \(\mat u = {(\rho Y_i, \rho, {\rho v}_j, \rho E)}^T\) where \(\rho\), \(v_j\), \(Y_i\), $\dot{\omega}_i$, \(E\), \( p \) represent the density, the velocity in the \(j\)th coordinate direction, mass fraction of species \(i\), mass production rate of chemical species \( i \), specific total energy, and pressure, respectively. The equation of state, total energy, specific heat at constant pressure, and the relation for the sum of species mass fractions are given by:
\begin{equation}\label{rcpcv}
\begin{aligned}
p &= \rho R T \sum_{i=1}^{n} \frac{Y_i}{m_i} = \rho T \sum_{i=1}^{n} Y_i \frac{R}{m_i} =r \rho T\\
    E &= \sum_{i=1}^{n} Y_i \left( e_{0i} +\int_{T_0}^{T} c_{pi}(T) \, dT\right) - \frac{p}{\rho} + \frac{|\mathbf{v}|^2}{2}, \\
    c_p &= \sum_{i=1}^{n} Y_i {c_p}_i, \quad c_v = \sum_{i=1}^{n} Y_i {c_v}_{i}, \quad  r = \sum_{i=1}^{n} Y_i r_i = c_p - c_v
\end{aligned}
\end{equation}
where \( R \) is the universal gas constant, \( T \) is the temperature, \( m_i \) is the molar mass of species \( i \) and \( c_{pi} \) is the specific heat capacity at constant pressure for species \( i \), which is here expressed in polynomial form as a function of temperature \cite{mcbride2002nasa}. In this work, the temperature \( T \) is found by solving \cite{renac2021entropy}:
\begin{equation}\label{C6}
\rho e = \sum \rho Y_i e_i, \quad e_i := e_{0i} + c_{vi} T
\end{equation}
where \( \rho e = \rho E - \rho \mathbf{v}^2 /2\), for each species \( i \), \( e_{0i} \) is a constant, and \( c_{vi} \) represents the specific heat at constant volume. 
The total energy per volume for a calorically perfect gas can be written as
\begin{equation}\label{Ecal}
E_{cal} = e_{0} + \frac{p}{\rho (\gamma-1)} + \frac{|\mathbf{v}|^2}{2}
\end{equation}
where $\gamma = c_p /c_v$ is the ratio of gas specific heats and $e_{0}$ is the reference internal energy at a reference temperature \( T_0 \).

The thermo-viscous-diffusive transport terms, appearing in Equation \(\ref{C5}\), take the following form:
\begin{equation}
\begin{aligned}
    \mathbf{J} &= -\rho \alpha \nabla Y, \\
    \mathbf{\tau} &= -\frac{2}{3} \mu (\nabla \cdot \mathbf{u}) \mathbf{I} + \mu \left( \nabla \mathbf{u} + (\nabla \mathbf{u})^T \right), \\
    \mathbf{q} &= -\kappa \nabla T,
\end{aligned}
\end{equation}
where \( \mathbf{J} \) is the species diffusion flux matrix, \( \mathbf{\tau} \) is the viscous stress tensor, described using Newton's law; and \( \mathbf{q} \) is the heat-flux vector, modeled using Fourier's law. In these equations, \( \alpha \) represents the mixture-averaged species diffusivities, \( \mu \) is the dynamic viscosity, and \( \kappa \) is the thermal conductivity.

\subsection{Entropy pairs and symmetrization of the governing equations}\label{S22}

Nonlinear hyperbolic systems of conservation laws arising from physical systems, such as the compressible Euler equations, commonly admit a generalized entropy pair  $\left( U\left( \mat u\right) , \boldsymbol{\mathcal{F}}\left( \mat u\right) \right) $ consisting of a convex generalized 
entropy function  $U\left( \mat u\right) : \mathbb{R}^{{n}_{s}}\rightarrow  \mathbb{R}$ and an entropy flux  $\boldsymbol{\mathcal{F}}\left( \mat u\right) : \mathbb{R}^{{n}_{s}}\rightarrow  \mathbb{R}^{d}$  that satisfies

 \begin{align}\label{s2200}
 \frac{\partial {\mathcal{F}_{j}}  }{\partial u_{k}} = \frac{\partial  {F}_{ij}}{\partial u_{k}} \frac{\partial  U}{\partial u_{i}},
      \quad   i, k =1, ..., {n}_{s},\quad \quad  j= 1,..., d .
\end{align}
Entropy pairs exist if and only if the hyperbolic system is symmetrized via the change of variables  $\bm{v}\left( \mat u\right) = {\partial U}/{\partial  \mat u}$ \cite{godunov1961interesting,renac2021entropy}, where  $\bm{v}$ are referred to as the entropy variables. An important property of entropy symmetrized hyperbolic systems emerges when the inner product of the conservation law is taken with respect to the entropy variables, namely, the following identities hold for smooth solutions \cite{barth1999numerical}

 \begin{align}\label{s220}
 \bm{v}^{T} \cdot \frac{\partial \mat u \left( \bm{v} \right)  }{\partial t} = \frac{\partial  U}{\partial t},
      \quad   \bm{v}^{T} \cdot \left(  \nabla  \cdot \mat{F} \right) = \nabla  \cdot \boldsymbol{\mathcal{F}}.
\end{align}

We consider the
 \begin{align}\label{s221}
 U = - \rho s(\bm{u}),
      \quad  \boldsymbol{\mathcal{F}}= - \rho s(\bm{u}) \bm{\mathbf{v}},
\end{align}
where thermodynamic entropy of the mixture, \( s \), is given by:

\begin{equation}\label{s}
    s := \sum_{i=1}^{n} Y_i s_i, \quad s_i := c_{vi} \ln(T) - \frac{R}{m_i} \ln(\rho Y_i),
\end{equation}
where \( s \) denotes the specific entropy of the mixture, \( s_i \) and \( c_{vi} \) are the specific entropy and constant volume specific heat for species \( i \), respectively. We note that entropy-satisfying solutions of the Navier-Stokes equation satisfy
 \begin{align}\label{s223}
\dfrac{\partial U}{\partial t} + \nabla  \cdot \boldsymbol{\mathcal{F}} \leq 0
\end{align}

\subsection{Mapping of Entropy Variables to Conservative Variables}\label{S22}
Using Equation \(\ref{s}\) and the Gibbs identity, we can write \cite{renac2021entropy}
\begin{equation}
T \, d(\rho s) = d(\rho e) - \sum_{i=1}^{n} g_i \, d\rho_i = d(\rho E) - \bm{\mathbf{v}} \cdot d(\rho \bm{\mathbf{v}}) + \frac{\bm{\mathbf{v}} \cdot \bm{\mathbf{v}}}{2} d\rho-
\sum_{i=1}^{n} g_i  \, d\rho_i
\end{equation}
where $g_i$ is the Gibbs function of species $i$, so the entropy variables can be written as
\begin{equation}\label{Esv}
 \bm{v}\left( \mat u\right)=\left(\frac{\partial U}{\partial \mat u}\right)= \frac{1}{T}
\begin{bmatrix} 
g_1 - g_n & \cdots & g_{n-1} - g_n & -g_n -\frac{\bm{\mathbf{v}} \cdot \bm{\mathbf{v}}}{2} & \bm{\mathbf{v}} & -1
\end{bmatrix}^T ,
\end{equation}
and the entropy potential flux can be derived:
\begin{equation}
\bm{\psi(u)} = \mathbf{F}(\mathbf{\mat {u}})^T \ \bm{v}\left( \mat u\right) - \boldsymbol{\mathcal{F}} .
\end{equation}
From this expression, using \(\ref{C5}\), \(\ref{s221}\), \(\ref{Esv}\), and the relation $\rho_i (e_i -c_{pi} T)= p$ \cite{renac2021entropy}, we have 

\begin{equation}\label{po1}
\begin{aligned}
\bm{\psi(u)} &=\frac{1}{T}\left( \sum_{i=1}^{n-1} \left(g_i - g_n\right)\rho_i \bm{\mathbf{v}} + \left(-g_n -\frac{\bm{\mathbf{v}} \cdot \bm{\mathbf{v}}}{2}\right)\rho \bm{\mathbf{v}} 
             + \left( \rho \mathbf{v} \otimes \mathbf{v} + p \mathbf{I}\right)\bm{\mathbf{v}} 
             - \left( \rho E + p\right)\bm{\mathbf{v}} \right)\\
             &- \rho s \bm{\mathbf{v}} 
             = \sum_{i=1}^{n} c_{pi} \rho_i \bm{\mathbf{v}} - \frac{e}{T} \rho \bm{\mathbf{v}} = (c_{p} -c_{v})\rho \bm{\mathbf{v}} = r(\mathbf{Y}) \rho \bm{\mathbf{v}} .
\end{aligned}
\end{equation}

\begin{remark} In order to derive the mapping from conservative variables to entropy variables, one can first compute the velocity and  Gibbs functions as:
\begin{equation}\label{gib}
\text{v}_j =\text{u}_{n+j}/ \text{u}_{n}, \quad \quad j=1, \cdots,d ,\\
g_i = h_i -T s_i = e_i + r_i T -T s_i, 
\end{equation}
where $h_i = e_i +r_i T$ is the specific enthalpy of species $i$ and  $r_i = c_{pi} -c_{vi}$. Using these variables and Equation \(\ref{Esv}\), we can obtain entropy variables for multi-component compressible flow.
\end{remark}
\begin{remark} Likewise, we can transform from the entropy variables to conservative variables by
\begin{equation}
\mathbf{u} = 
\begin{bmatrix} 
\rho_i & \cdots & \rho_{n-1} &\rho =\sum_{i=1}^{n} \rho_i & \rho \bm{\mathbf{v}} & \rho c_{v} T + \frac{(\rho \mathbf{v})^2}{2\rho}
\end{bmatrix}^T
\end{equation}
where, in terms of the entropy variables, we have
\begin{equation}
\begin{aligned}
    T&= -\frac{1}{v_{n+d+1}},\quad &\text{v}_j &=  T v_{n+j},\quad &j &= 1, \dots, d\\
    g_n &= T v_n + \frac{\sum_{j=1}^{d} \text{v}_j^2}{2},  &g_i &= T v_i + g_n, \quad &i &= 1, \dots, n-1, \\
    s_i &= c_{p_i} - \frac{g_i}{T}, &\rho_i &= \exp\left(\frac{m_i}{R} \left( c_{v_i} \ln(T) - s_i \right) \right), \quad &i &= 1, \dots, n.
\end{aligned}
\end{equation}
\end{remark}
\section{Numerical methods\label{sec:Numerical methods}}

\subsection{Finite Volume Discretization}
Equation \(\ref{C4}\) is discretized using the finite volume method. Integrating the governing equation over a control volume \(V_j\) and applying the divergence theorem yields:

\begin{equation}
    \frac{d}{dt} \int_{V_j} \mathbf{u} \, dV 
    + \int_{\partial V_j} \mathbf{F} \cdot \mathbf{n} \, dA 
    - \int_{\partial V_j} \mathbf{G} \cdot \mathbf{n} \, dA 
    = \int_{V_j} \mathbf{S} \, dV,
\end{equation}
where $\mathbf{n}$ is the face normal direction. The volume integrals are approximated as cell-averaged values, and surface integrals are converted to sums over the control volume faces. This results in the semi-discrete formulation:
\begin{equation}
    \frac{d \mathbf{u}_j}{dt} 
    + \frac{1}{V_j} \sum_f \left( \mathbf{F}_f - \mathbf{G}_f \right) A_f 
    = \mathbf{S}_j,
\end{equation}
where \(\mathbf{F}_f\) and \(\mathbf{G}_f\) represent the face-normal fluxes through face \(f\), \(A_f\) is the face area, and \(V_j\) is the control volume. The inviscid flux \(\mathbf{F}_f\) is computed using a numerical flux function, which depends on the left \(\mathbf{u}_L\) and right \(\mathbf{u}_R\) states at the face, $\mathbf{F}_f = \mathbf{F}(\mathbf{u}_L, \mathbf{u}_R)$. The viscous flux \(\mathbf{G}_f\) depends on the gradients of the conserved variables and is typically computed using central difference approximations. 

For time integration, an explicit scheme such as the strong stability-preserving Runge-Kutta (SSP-RK) method \cite{gottlieb2001strong} is applied. The fully discrete update equation is given by:

\begin{equation}
    \mathbf{u}_j^{n+1} = \mathbf{u}_j^n 
    - \frac{\Delta t}{V_j} \sum_f \left( \mathbf{F}_f - \mathbf{G}_f \right) A_f 
    + \Delta t \, \mathbf{S}_j,
\end{equation}
where \(\Delta t\) is the time step size. 

This discretization ensures the conservation of mass, momentum, and energy while accurately resolving fluxes at the interfaces using \(\mathbf{u}_L\) and \(\mathbf{u}_R\). In the next subsection, we will discuss the inviscid fluxes based on entropy-conservation and Entropy-Stable methods.

\subsection{Entropy-Conservative and Entropy-Stable flux}

Writing the Equation \ref{C4} for Euler compressible flow, i.e., viscous effects and source term are not considered:
\begin{equation}\label{es32}
    \frac{\partial \mathbf{\mat {u}}}{\partial t} + \nabla \cdot \mathbf{F}(\mathbf{\mat {u}}) = 0,
\end{equation}
and multiplying on the left by the entropy variables $\bm{v}^{T}$, we obtain
 \begin{align}\label{es33}
\dfrac{\partial U}{\partial t} + \nabla  \cdot \boldsymbol{\mathcal{F}} = 0
\end{align}
Thus, the scheme \ref{es32} is called \textit{entropy conservative} if it satisfies the Equation \ref{es33}, and \textit{entropy stable} if it satisfies the inequality
 \begin{align}\label{es34}
\dfrac{\partial U}{\partial t} + \nabla  \cdot \boldsymbol{\mathcal{F}} < 0 .
\end{align}
In the context of finite volume methods, Tadmor demonstrated how this can be achieved through the judicious choice of numerical flux functions \cite{tadmor1987numerical}. Tadmor demonstrated that if the flux $\mathbf{F}^{ec}$ satisfies the so-called \textit{shuffle} condition:

\begin{equation}\label{L_EC}
[\![ \bm{v}(\mat{u})]\!] \cdot \mathbf{F}^{ec}(\mat{u}_L, \mat{u}_R) = [\![ \bm{\psi}(\mat{u})\cdot \mathbf{n}]\!]
\end{equation}
where $[\![ a ]\!]= a_R -a_L$, then \ref{es33} is satisfied and \ref{es32} is entropy conservative. Also, The finite volume scheme \ref{es32} is entropy stable if the interface flux $\mathbf{F}^{es}$ satisfies an entropy dissipation condition:

\begin{equation}\label{L_ES}
[\![ \bm{v}(\mat{u})]\!] \cdot \mathbf{F}^{es}(\mat{u}_L, \mat{u}_R) < [\![ \bm{\psi}(\mat{u})\cdot \mathbf{n}]\!]
\end{equation}

\subsubsection{Kinetic energy preserving with Entropy-Conservative flux}
An entropy-conserving flux, defined as  
\[
\mathbf{F}^{ec} = [\mathbf{f}_{1,1}^{ec},\ \dots,\ \mathbf{f}_{1,N-1}^{ec},\ \mathbf{f}_{1}^{ec},\  \mathbf{f}_2^{ec},\  \mathbf{f}_3^{ec}],
\]  
which satisfies condition \(\ref{L_EC}\), was introduced in \cite{gouasmi2020formulation,chandrashekar2013kinetic,renac2021entropy}. This flux is given by:  
\begin{equation}\label{ECflux}
\begin{aligned}
\mathbf{f}_{1,i}^{ec}(\mathbf{u}_L, \mathbf{u}_R) &= \rho_i^{\text{ln}}  \ \overline{\mathbf{v}} \cdot \mathbf{n}, \quad 1 \leq i \leq n-1, \\
\mathbf{f}_{1}^{ec}(\mathbf{u}_L, \mathbf{u}_R) &= \sum_{i=1}^n \mathbf{f}_{1,i}^{ec}, \\
\mathbf{f}_{2}^{ec}(\mathbf{u}_L, \mathbf{u}_R) &= \overline{\mathbf{v}} \ \mathbf{f}_{1}^{ec} + p^{ec} \mathbf{n}, \\
\mathbf{f}_{3}^{ec}(\mathbf{u}_L, \mathbf{u}_R) &= \sum_{i=1}^n \left( e_{0i} + \frac{c_{vi}}{\theta^{ln}} + \frac{\overline{\mathbf{v} \cdot \mathbf{v}} }{2} \right) \mathbf{f}_{1,i}^{ec} + p^{ec} \ \overline{\mathbf{v}} \cdot \mathbf{n}.
\end{aligned}  
\end{equation}
Here, the logarithmic mean is denoted as $a^{\text{ln}} = (a_{R} - a_{L})/(\text{ln}(a_{R}) - \text{ln}(a_{L})),$ 
while the arithmetic mean is given by $\overline{a} = (a_{L} + a_{R})/2.$ Additionally, the temperature-related term is defined as \( \theta = 1/T \). A robust numerical method for computing \( a^{\text{ln}} \) was developed by Ismail and Roe \cite{ismail2009affordable}. The interface flux \( \mathbf{F}^{ec} \) ensures entropy conservation for the multi-component system defined in Equation \(\ref{C4}\).  

The Entropy-Conserving (EC) flux introduced in this work extends Chandrasekhar’s EC flux \cite{chandrashekar2013kinetic} to a multi-component formulation for the compressible Euler equations. Chandrasekhar’s original flux also possesses the Kinetic-Energy Preserving (KEP) property, as described by Jameson \cite{jameson2008formulation}, a feature particularly advantageous in turbulence simulations \cite{jameson2008formulation}. For the compressible Euler equations, Jameson \cite{jameson2008formulation} demonstrated that the KEP property is maintained when the momentum flux \( f_{\rho u} \) is expressed as  
\begin{equation}\label{KEP}
f_{\rho u} = p^{ec} + u f_{\rho},
\end{equation}  
where \( f_{\rho} \) represents the mass flux, and \( p^{ec} \) is any consistent average of pressure.  

Extending Jameson’s analysis to a multi-component system follows naturally from \cite{gouasmi2020formulation}. It can be established that the KEP property remains valid when the momentum flux retains the same structural form as in the single-component case, with \( f_{\rho} \) representing the total mass flux \cite{gouasmi2020formulation}. The EC flux derived here satisfies these conditions, with \( p^{ec} \) given by:  
\begin{equation}\label{KEP}
 p^{ec} = \frac{\sum_{i=1}^n {r}_{i} \ {\overline{\rho_i}}}{{\overline{\theta}} }.
\end{equation}

\subsubsection{Entropy-Stable flux}

Tadmor \cite{tadmor2003entropy,fjordholm2012arbitrarily} introduced a significant class of Entropy-Stable schemes, incorporating an entropy variable-based numerical dissipation term to modify the Entropy-Conservative numerical flux. Starting from EC flux $\mathbf{F}^{ec}$, we introduce a general numerical dissipation term to obtain a kinetic energy-preserving and Entropy-Stable flux $\mathbf{F}^{es}$, expressed as
 \begin{align}\label{KEPES1}
\mathbf{F}^{es}(\mathbf{u}_L, \mathbf{u}_R)= \mathbf{F}^{ec}(\mathbf{u}_L, \mathbf{u}_R) 
-\frac{1}{2}\mathbf{D} (\bm{u}_R - \bm{u}_L) ,
\end{align}
where $\mathbf{D}$ represents a suitable dissipation operator. To ensure entropy stability, the dissipation term in \eqref{KEPES1} must be carefully designed so that the numerical flux satisfies the entropy inequality \eqref{es34}. To accomplish this, we redefine the dissipation term to explicitly incorporate the jump in entropy variables:
\begin{align}\label{KEPES2}
\mathbf{D} (\bm{u}_R - \bm{u}_L) \simeq \mathbf{D} \mathbf{A}_0 (\bm{v}_R - \bm{v}_L),
\end{align}
where $\mathbf{A}_0 = {\partial \bm{u}}/{\partial \bm{v}}$ is the symmetric, positive-definite entropy Jacobian matrix that links the conserved and entropy variables. See Section \ref{S22} for more details.

\subsection{Double-Flux Model}
A major challenge in numerical simulations of multi-component flows is the emergence of spurious pressure oscillations, particularly near material interfaces. To identify the root cause and assess the sensitivity of these oscillations, we examine the relationship between internal energy and pressure. This approach was first proposed by Abgrall \cite{billet2003adaptive, abgrall1996prevent} and later extended to a high-order WENO scheme \cite{houim2011low} within the finite difference framework, as well as to a discontinuous Galerkin method \cite{lv2014discontinuous}.

The pressure oscillations arise only for variations in $e_0$ and $\gamma$ occurring in multi-component flows. To mitigate them each cell is treated as perfect gas for the convective timestep. To this end, the equivalent perfect gas properties are calculated for each cell as:

\begin{equation}\label{Estar0}
\begin{aligned}
 c_{pi}^{*} &= \frac{\int_{T_0}^{T} c_{pi}(T) \, dT}{T} , \quad \quad \gamma^{*} = \frac{c_{p}^{*}}{c_{p}^{*} -r}\\
 \end{aligned}
\end{equation}
the total energy per unit volume $E$ can be reformulated as 
\begin{equation}\label{Estar}
\begin{aligned}
E &\overset{\ref{rcpcv}}{=} \sum_{k=1}^{n} Y_i \left(e_{0i} + c_{pi}^{*}T \right) -\frac{p}{\rho} + \frac{|\mathbf{v}|^2}{2} \overset{\ref{Estar0}}{=} e_{0}^{*} + \sum_{k=1}^{n} Y_i \  c_{pi}^{*}\ T  -\frac{p}{\rho} + \frac{|\mathbf{v}|^2}{2}\\
&= e_{0}^{*} + c_{p}^{*} T -\frac{p}{\rho} + \frac{|\mathbf{v}|^2}{2} = e_{0}^{*} + \frac{p}{\rho (\gamma^{*}-1)}+ \frac{|\mathbf{v}|^2}{2}
 \end{aligned}
\end{equation}

Billet and Abgrall \cite{billet2003adaptive} demonstrated that the pressure and velocity of a material interface are preserved if $\gamma^*$ and $\rho e_0^*$ remain constant within each computational cell throughout the entire time step. During reconstruction, these quantities are stored per cell, ensuring consistency in energy calculations. Based on this approach, the flux at the cell interface $j+1/2$ is computed twice---once using the $e_0^*$ and $\gamma^*$ of the left cell and again using those of the right cell. Consequently, the Double-Flux model is inherently non-conservative.

During each stage of the time-marching algorithm the pressure is evaluated from Equation \ref{Estar}. After the timestep the density and pressure are taken from the time-marching scheme. Temperature at the new timestep $T^{n+1}$ is evaluated from the equation of state. Afterwards the caloric equation of state (Equation \ref{Ecal}) is evaluated to compute the internal energy at the new timestep and update $e_0^*$ and $\gamma^*$. 


\begin{remark}Without loss of generality, the formulations presented in \autoref{sec:Governing equations} can be expressed in terms of $\gamma^{*}$ and $e_0^{*}$. For instance, the specific heat at constant volume for a general fluid is given by $c_v^{*} = c_p^{*} - r$.
\end{remark}
\section{Entropy-Stable/Double-Flux scheme\label{sec:Entropy-Stable/Double-Flux scheme}}

The first subsection presents the necessary preliminaries and notation for developing the Entropy-Stable/Double-Flux scheme. The following subsections focus on the formulation of the numerical flux for the multi-component compressible system.

\subsection{Preliminaries and notation}
In this subsection, we introduce key mathematical identities and properties that will be used throughout our analysis. 

\begin{lemma}The jump of the product \(r(\mathbf{Y}) \rho\) across the interface is equal to the product of \(r\) and the jump of \(\rho\) across the interface.
\end{lemma}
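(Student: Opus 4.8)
The plan is to reduce the claim to the additive structure of the mixture gas constant, $r(\mathbf{Y}) = \sum_{i=1}^{n} Y_i r_i$ with each $r_i = c_{pi} - c_{vi}$ a fixed physical constant, combined with the Double-Flux hypothesis that the composition-dependent thermodynamic data are frozen per cell over a convective step. Writing the partial densities as $\rho_i = \rho Y_i$, the starting point is the pointwise identity $r(\mathbf{Y})\,\rho = \sum_{i=1}^{n} r_i \rho_i$, which carries no state-dependent coefficients beyond the $\rho_i$ themselves.

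First I would apply the jump bracket $[\![ a ]\!] = a_R - a_L$ to this identity and use its linearity to pass it through the finite sum, obtaining $[\![ r(\mathbf{Y})\rho ]\!] = \sum_{i=1}^{n} r_i\,[\![ \rho_i ]\!] = \sum_{i=1}^{n} r_i\,[\![ \rho Y_i ]\!]$. Next I would invoke the Double-Flux construction described around \eqref{Estar}: since $\gamma^{*}$, $e_0^{*}$ and the species composition are held fixed within each cell during the flux evaluation, the mass fractions $Y_i$ carry a single value across the interface, so the bracket acts on $\rho$ alone and $[\![ \rho Y_i ]\!] = Y_i\,[\![ \rho ]\!]$. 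Substituting and factoring then yields $[\![ r(\mathbf{Y})\rho ]\!] = \bigl(\sum_{i=1}^{n} Y_i r_i\bigr)[\![ \rho ]\!] = r\,[\![ \rho ]\!]$, which is the assertion; equivalently, one may simply observe that under the frozen-composition assumption $r_L = r_R$, their common value being the $r$ of the statement, after which the identity is immediate.

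The algebraic content — linearity of the jump and the decomposition $r\rho = \sum_i r_i \rho_i$ — is routine; the one step that demands care is the middle one, namely pinning down the precise sense in which $r$ is single-valued at the interface. For the raw multi-component Euler system this fails, since a material interface generically separates fluids of different $r(\mathbf{Y})$, so I would state the hypothesis explicitly: the identity holds across any interface over which $\mathbf{Y}$, equivalently $r(\mathbf{Y})$, is constant — which is exactly the situation engineered by the Double-Flux treatment and the setting in which the lemma will later be used to control the entropy dissipation of the Entropy-Stable/Double-Flux flux.
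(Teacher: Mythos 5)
Your proof is correct and rests on the same single hypothesis as the paper's, namely that the composition (hence $r$) does not jump across the interface; the paper gets there in one line via the Leibniz rule $[\![ab]\!]=\overline{a}[\![b]\!]+\overline{b}[\![a]\!]$ together with $[\![r]\!]=0$, which is precisely your closing ``equivalently'' remark. Your explicit flagging that the identity would fail for a genuine material interface unless the Double-Flux freezing of the composition is invoked is a point the paper leaves implicit, but the mathematical route is essentially identical.
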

 \begin{equation}\label{as44}
[\![r(\mathbf{Y}) \rho]\!] = r [\![ \rho]\!].
\end{equation}
 \begin{proof} To prove \ref{as44}, we use the property of the jump operator $[\![ a b]\!] = \overline{a}[\![ b]\!] + \overline{b}[\![ a]\!]$ and $[\![ r]\!]=0$,  so we can expand $
[\![r(\mathbf{Y}) \rho]\!] = \overline{r}[\![ \rho]\!] + \overline{\rho}[\![ r]\!] = r [\![ \rho]\!].$
 \end{proof}

\begin{lemma}The term $\sum_{i=1}^n r_i  Y_i [\![\text{ln} (Y_{i})]\!]  \rho^{\text{ln}}$ is non-positive. 
\end{lemma}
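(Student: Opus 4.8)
The plan is to strip away the manifestly positive factors and reduce the statement to the non-negativity of a discrete relative entropy. First, the logarithmic mean $\rho^{\ln}=[\![\rho]\!]/[\![\ln\rho]\!]$ of two positive densities is itself strictly positive, and every species gas constant $r_i=R/m_i$ is positive; hence it suffices to prove $\sum_{i=1}^{n} r_i Y_i [\![\ln Y_i]\!]\le 0$, where the mass fraction $Y_i$ entering as a weight is the reference (frozen, left-cell) value $Y_{i,L}$ used in the Double-Flux evaluation.

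Second, I would bring in the two constraints available in this setting: the partition of unity $\sum_i Y_i=1$ on each side of the interface, and---exactly as in the preceding lemma---the fact that the mixture gas constant $r=\sum_i r_i Y_i$ is frozen across the interface, so $[\![r]\!]=0$, i.e. $\sum_i r_iY_{i,L}=\sum_i r_iY_{i,R}=:r$. With these, the vectors $p_i:=r_iY_{i,L}/r$ and $q_i:=r_iY_{i,R}/r$ are genuine discrete probability distributions with $p_i/q_i=Y_{i,L}/Y_{i,R}$, and Gibbs' inequality (non-negativity of the Kullback--Leibler divergence) gives $\sum_i p_i\ln(p_i/q_i)\ge 0$, which is precisely $\sum_i r_iY_{i,L}\ln(Y_{i,L}/Y_{i,R})\ge 0$, i.e. $\sum_i r_iY_{i,L}[\![\ln Y_i]\!]\le 0$; multiplying through by $\rho^{\ln}>0$ closes the argument.

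An equivalent route, which I would probably prefer since it invokes $[\![r]\!]=0$ most transparently, is to observe that $\phi(\mathbf Y):=\sum_i r_iY_i\ln Y_i$ is convex on the positive orthant (its Hessian is $\mathrm{diag}(r_i/Y_i)\succ 0$), so its Bregman divergence is non-negative. In $D_\phi(\mathbf Y_L,\mathbf Y_R)$ the linear term $\nabla\phi(\mathbf Y_R)\cdot[\![\mathbf Y]\!]=\sum_i r_i(\ln Y_{i,R}+1)[\![Y_i]\!]$ loses its $\sum_i r_i[\![Y_i]\!]=[\![r]\!]=0$ piece, and the remaining terms collapse to $D_\phi(\mathbf Y_L,\mathbf Y_R)=-\sum_i r_iY_{i,L}[\![\ln Y_i]\!]\ge 0$, giving the claim once more.

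The inequality itself is thus elementary; the real work is the bookkeeping around it. The points that need care are: (i) confirming that the weight $Y_i$ in the term is the left/reference mass fraction and not the right one---with the right-state weight the identical computation yields $\sum_i r_iY_{i,R}[\![\ln Y_i]\!]=D_\phi(\mathbf Y_R,\mathbf Y_L)\ge 0$, the opposite sign, so the sign in the lemma is tied to how the Double-Flux reference state enters; (ii) justifying $[\![r]\!]=0$ in the present framework so that $p_i,q_i$ really normalize to probability vectors (and likewise so that the $\sum_i r_i[\![Y_i]\!]$ term drops in the Bregman computation); and (iii) the degenerate case where some $Y_{i,L}$ or $Y_{i,R}$ vanishes and $\ln Y_i\to-\infty$, which is harmless since $Y_{i,L}\ln Y_{i,L}\to 0$ and the product with $\rho^{\ln}$ stays finite, but which should either be remarked on or excluded by the standing assumption of strictly positive mass fractions.
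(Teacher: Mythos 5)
Your proof is correct, and it takes a genuinely different route from the paper's. The paper argues termwise: it replaces $r_iY_i$ by $r=\sum_j r_jY_j$, rewrites $[\![\ln Y_i]\!]=[\![Y_i]\!]/Y_i^{\ln}$ via the logarithmic mean, replaces $Y_i^{\ln}$ by the logarithmic mean of the smallest mass fraction so that the coefficient of $[\![Y_i]\!]$ becomes species-independent, and then collapses the bound to zero using $\sum_i[\![Y_i]\!]=0$. You instead divide out $r\rho^{\ln}>0$ and recognize the remaining sum (up to sign) as a Kullback--Leibler, equivalently Bregman, divergence between the probability vectors $r_iY_{i,L}/r$ and $r_iY_{i,R}/r$, with $[\![r]\!]=0$ serving precisely to normalize them; Gibbs' inequality then finishes the argument. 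Your route buys something real: the paper's termwise replacements ($r_iY_i\le r$, and enlarging $1/Y_i^{\ln}$ to $1/Y_i^{\min,\ln}$) preserve the direction of the inequality only for species whose jumps $[\![\ln Y_i]\!]$ and $[\![Y_i]\!]$ are non-negative, and since the jumps of the $Y_i$ sum to zero, some must be negative whenever the composition actually changes across the face --- so the paper's chain requires extra sign bookkeeping that your global convexity argument sidesteps entirely. Your caveats are also well placed: the lemma's sign does hinge on the weight $Y_i$ being the frozen left/reference state (the right-state weight yields the reverse divergence and the opposite sign, and the arithmetic mean yields no definite sign), and both proofs rest on $[\![r]\!]=0$, which the paper asserts in the preceding lemma as part of the Double-Flux freezing; making these two points explicit is exactly what a careful write-up of this lemma should do.
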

 \begin{equation}\label{as45}
\sum_{i=1}^n r_i  Y_i [\![\text{ln} (Y_{i})]\!]  \rho^{\text{ln}}  \leq 0
\end{equation}
 \begin{proof} To prove inequality \ref{as45}, we first use the fact that the sum of the jumps of \(Y_i\) across the interface is zero, i.e.,
\begin{equation}\label{sigYi}
\sum_{i=1}^n [\![Y_{i}]\!] = \sum_{i=1}^n  Y_{i,R} - \sum_{i=1}^n Y_{i,L} = 1 - 1 = 0.
\end{equation} 
Thus, we can proceed by
\begin{equation}\label{Profas45}
\sum_{i=1}^n r_i  Y_i [\![\text{ln} (Y_{i})]\!]  \rho^{\text{ln}}  \leq \sum_{i=1}^n  [\![\text{ln} (Y_{i})]\!]  r  \rho^{\text{ln}} =  \sum_{i=1}^n 
\frac{[\![Y_{i}]\!]}{Y_i^{\text{ln}}} r  \rho^{\text{ln}} \leq \sum_{i=1}^n 
[\![Y_{i}]\!] \frac{r \rho^{\text{ln}}}{{Y_i^{min}}^{\text{ln}}}\overset{\ref{sigYi}}{=} 0
\end{equation}
where $Y_i ^{min} = \text{min}_{1 \leq Y_i \leq n} Y_i$.
 \end{proof}

\begin{lemma}The jump of the product of the inverse temperature and the Gibbs function of species i, is given by:
 \begin{equation}\label{as46}
[\![\theta g_{i}]\!] = e_{0i}^* [\![\theta]\!] + c_{vi}^* [\![\text{ln} \theta]\!] +r_i [\![\text{ln} \rho_i]\!] 
\end{equation}
\end{lemma}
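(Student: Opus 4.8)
The plan is to reduce the identity to a closed-form expression for $\theta g_i$ in which every occurrence of the state is either $\theta$, $\ln\theta$, or $\ln\rho_i$, with all remaining factors being species constants, and then to apply the jump operator termwise.

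First I would start from the definition of the Gibbs function given in the remark following \ref{gib}, namely $g_i = h_i - T s_i = e_i + r_i T - T s_i$, combined with the caloric relation $e_i = e_{0i}^{*} + c_{vi}^{*}T$ (the Double-Flux counterpart of \ref{C6}, obtained by replacing the true coefficients with the equivalent perfect-gas ones of \ref{Estar0}; in the calorically perfect case the stars are simply dropped) and the species entropy from \ref{s}, rewritten as $s_i = c_{vi}^{*}\ln T - r_i \ln\rho_i$ after using $R/m_i = r_i$ and $\rho_i = \rho Y_i$. Multiplying $g_i$ by $\theta = 1/T$ gives
\[
\theta g_i \;=\; \frac{e_i}{T} + r_i - s_i \;=\; e_{0i}^{*}\theta + c_{vi}^{*} + r_i + c_{vi}^{*}\ln\theta + r_i \ln\rho_i ,
\]
where I have used $\ln T = -\ln\theta$ to turn $-c_{vi}^{*}\ln T$ into $+c_{vi}^{*}\ln\theta$. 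This is the only genuine computation, and it is short.

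Next I would apply $[\![\cdot]\!]$ to this identity using its linearity and the product rule $[\![ab]\!] = \overline{a}[\![b]\!] + \overline{b}[\![a]\!]$. Each coefficient $e_{0i}^{*}$, $c_{vi}^{*}$, $r_i$ can be pulled outside the corresponding jump precisely because it has vanishing jump across the interface, and the additive constant $c_{vi}^{*} + r_i$ drops out for the same reason. What then survives is exactly $[\![\theta g_i]\!] = e_{0i}^{*}[\![\theta]\!] + c_{vi}^{*}[\![\ln\theta]\!] + r_i[\![\ln\rho_i]\!]$, which is \ref{as46}.

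The step I would be most careful about — and the only one that is not pure algebra — is the justification that $[\![e_{0i}^{*}]\!] = [\![c_{vi}^{*}]\!] = [\![r_i]\!] = 0$. This is not a generic fact but a direct consequence of the Double-Flux construction recalled in Section~\ref{sec:Numerical methods}: each cell is assigned frozen equivalent perfect-gas data $(e_{0i}^{*},\gamma^{*})$, hence frozen $c_{vi}^{*}$, for the duration of the convective step, and when the interface flux at a face is assembled the left and right states entering a single evaluation carry the \emph{same} frozen coefficients, so their differences vanish identically; $r_i = R/m_i$ is a fixed material constant in any case. Once this observation is made explicit, the remainder of the proof is the one-line termwise application of the jump rules described above, so no further calculation is needed.
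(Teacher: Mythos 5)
Your proof is correct and follows essentially the same route as the paper: expand $\theta g_i$ using $g_i = e_i + r_iT - Ts_i$, the caloric relation $e_i = e_{0i}^*+c_{vi}^*T$, and $s_i = c_{vi}^*\ln T - r_i\ln\rho_i$, obtaining $\theta g_i = e_{0i}^*\theta + c_{pi}^* + c_{vi}^*\ln\theta + r_i\ln\rho_i$ (your $c_{vi}^*+r_i$ is the paper's $c_{pi}^*$), and then apply the jump operator termwise. Your explicit remark that $[\![e_{0i}^*]\!]=[\![c_{vi}^*]\!]=[\![r_i]\!]=0$ follows from the Double-Flux freezing of the equivalent perfect-gas coefficients within a single face evaluation is a point the paper leaves implicit, but it does not change the argument.
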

 \begin{proof}To prove equation \ref{as46}, we have
\begin{equation}
 \begin{aligned}
[\![\theta g_{i}]\!] &\overset{\ref{gib}}{=} [\![\theta (e_i^* + r_i T - T s_i)]\!] = [\![\theta  e_{0i}^* + \ c_{pi}^* - s_i)]\!]\\
&\overset{\ref{s}}{=} [\![\theta  e_{0i}^* + \ c_{pi}^* + c_{vi}^* \text{ln}\theta + r_i \text{ln} \rho_i]\!] =  e_{0i}^* [\![\theta]\!] + c_{vi}^* [\![\text{ln} \theta]\!] +r_i [\![\text{ln} \rho_i]\!]  
\end{aligned}
\end{equation}
\end{proof}
\begin{lemma}With consider $\beta = \rho/p = \theta / r$, the following relation holds:

 \begin{equation}\label{as47}
\sum_{i=1}^{n}\left(e_{0i}^* - \frac{c_{vi}^*}{\theta^{\text{ln}}}\right) Y_i \ \rho^{\text{ln}} =\left( e_{0}^{*} + \frac{1}{(\gamma^{*}-1) \beta^{\text{ln}} }\right) \rho^{\text{ln}}
\end{equation}
\end{lemma}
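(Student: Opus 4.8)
The statement is a purely algebraic identity, and the plan is to obtain it by splitting the species sum and using the definitions of the mixture-averaged reference energy and specific heat together with one elementary property of the logarithmic mean. First I would factor the common $\rho^{\text{ln}}$ out of the sum and separate the two contributions, $\rho^{\text{ln}}\sum_{i=1}^n e_{0i}^* Y_i$ and $-\dfrac{\rho^{\text{ln}}}{\theta^{\text{ln}}}\sum_{i=1}^n c_{vi}^* Y_i$, using that $\theta^{\text{ln}}$ carries no species index and so leaves the summation.

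The first contribution is immediate: by $e_0^* = \sum_{i=1}^n Y_i e_{0i}^*$ it equals $e_0^*\rho^{\text{ln}}$, which is the first term on the right-hand side. For the second, I would invoke the starred analogue $c_{vi}^* = c_{pi}^* - r_i$ of the relation $r_i = c_{pi} - c_{vi}$ (the remark after the Double-Flux model gives $c_v^* = c_p^* - r$) to get $\sum_{i=1}^n Y_i c_{vi}^* = \sum_{i=1}^n Y_i c_{pi}^* - \sum_{i=1}^n Y_i r_i = c_p^* - r = c_v^*$, so this contribution reduces to $\dfrac{c_v^*}{\theta^{\text{ln}}}\rho^{\text{ln}}$ (with the sign matching the energy-flux convention used in \ref{ECflux}).

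It then remains to show $\dfrac{c_v^*}{\theta^{\text{ln}}} = \dfrac{1}{(\gamma^*-1)\beta^{\text{ln}}}$. From $\gamma^* = c_p^*/(c_p^*-r)$ one computes $\gamma^* - 1 = r/(c_p^*-r) = r/c_v^*$, hence $1/(\gamma^*-1) = c_v^*/r$. And since $r$ does not jump across the interface, $[\![ r ]\!]=0$ (the same fact used to prove \ref{as44}), writing $\beta = \theta/r$ gives $\beta_L = \theta_L/r$ and $\beta_R = \theta_R/r$ with a common factor $r$, which cancels in the quotient defining the logarithmic mean; therefore $\beta^{\text{ln}} = \theta^{\text{ln}}/r$. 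Multiplying these two observations gives $\dfrac{1}{(\gamma^*-1)\beta^{\text{ln}}} = \dfrac{c_v^*}{r}\cdot\dfrac{r}{\theta^{\text{ln}}} = \dfrac{c_v^*}{\theta^{\text{ln}}}$, completing the argument.

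There is no genuine analytic obstacle; the only points needing care are the logarithmic-mean rescaling $\beta^{\text{ln}} = \theta^{\text{ln}}/r$, which hinges on $r$ being constant across the interface, and the bookkeeping of the starred mixture averages ($c_{vi}^*$, $c_p^*$, $e_0^*$) so that the species sums collapse to $c_v^*$ and $e_0^*$ respectively. I would also verify the sign on the $c_{vi}^*/\theta^{\text{ln}}$ term against \ref{ECflux} to make sure both sides of the stated identity agree.
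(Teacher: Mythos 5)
Your proposal is correct and follows essentially the same route as the paper's proof: collapse the species sums to $e_0^*$ and $c_v^*$ via the mixture definitions in \ref{rcpcv}, use $\theta^{\text{ln}} = r\,\beta^{\text{ln}}$ (which you additionally justify from $r$ being single-valued at the interface, a step the paper only asserts), and use $\gamma^*-1 = r/c_v^*$. The sign check you flag at the end is worth doing: since $\tfrac{1}{(\gamma^*-1)\beta^{\text{ln}}} = \tfrac{c_v^*}{\theta^{\text{ln}}}$, the $-c_{vi}^*/\theta^{\text{ln}}$ written on the left of \ref{as47} produces $-\tfrac{1}{(\gamma^*-1)\beta^{\text{ln}}}$ rather than the stated $+$, so the minus sign in the lemma's left-hand side (which the paper's own proof silently drops in its final line) is a typo --- the flux \ref{ECflux} and the energy flux in \ref{ESDLfluxtot} both use $+c_{vi}/\theta^{\text{ln}}$, confirming the plus convention is the intended one.
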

 \begin{proof}With the observation that $\theta^{\text{ln}} = r \ \beta^{\text{ln}}$, we get
\begin{equation}
  \begin{aligned}
\sum_{i=1}^{n}\left(e_{0i}^{*} - \frac{c_{vi}^{*}}{\theta^{\text{ln}}}\right) Y_i \ \rho^{\text{ln}} &\overset{\ref{rcpcv}}{=} \left(e_{0}^* - \frac{c_{v}^*}{\theta^{\text{ln}}}\right) \ \rho^{\text{ln}} = \left(e_{0}^* - \frac{c_{v}^*}{r \ \beta^{\text{ln}}}\right) \ \rho^{\text{ln}}\\
&=\left( e_{0}^{*} + \frac{1}{(\gamma^{*}-1) \beta^{\text{ln}} }\right) \rho^{\text{ln}}
\end{aligned}
\end{equation}
\end{proof}

\subsection{Combine the Entropy-Stable and Double-Flux formulations}
In the following, we develop an inviscid numerical flux for the spatial discretization of Equation \ref{C4}. Our goal is to propose a new type of inviscid numerical flux that is Entropy-Stable and benefits from the advantages of the Double-Flux model. To this end, we first consider the numerical flux $\mathbf{F}^{ec}$ and modify it based on mixture variables. The Entropy-Stable flux for the Double-Flux model  \( \mathbf{F}^{es^*} = [\mathbf{f}_{1,1}^{es^*},\ \dots,\ \mathbf{f}_{1,n-1}^{es^*},\ \mathbf{f}_{1}^{es^*},\  \mathbf{f}_2^{es^*},\  \mathbf{f}_3^{es^*}] \) is given by  

\begin{equation}\label{ESDLfluxtot}
\begin{aligned}
\mathbf{f}_{1,i}^{es^*}(\mathbf{u}_L, \mathbf{u}_R)&= Y_i \ \rho^{\text{ln}}  \ \overline{\mathbf{v}} \cdot \mathbf{n}, \quad 1 \leq i \leq n-1, \\
 \mathbf{f}_{1 }^{es^*}(\mathbf{u}_L, \mathbf{u}_R)& =\sum_{i=1}^n  \mathbf{f}_{1,i}^{es^*}= \sum_{i=1}^n Y_i \ \rho^{\text{ln}}  \ \overline{\mathbf{v}} \cdot \mathbf{n} = \rho^{\text{ln}}  \ \overline{\mathbf{v}} \cdot \mathbf{n}\\
 \mathbf{f}_{2 }^{es^*}(\mathbf{u}_L, \mathbf{u}_R)&= \overline{\mathbf{v}} \ \mathbf{f}_{1 }^{es^*} + \overline{p} \ \mathbf{n}, \\
\mathbf{f}_{3 }^{es^*}(\mathbf{u}_L, \mathbf{u}_R) &= \left( e_{0}^{*} + \frac{1}{(\gamma^{*}-1) \beta^{\text{ln}} } +\frac{\overline{\mathbf{v} \cdot \mathbf{v}}}{2}\right) \mathbf{f}_{1 }^{es^*} +\overline{p}\ \overline{\mathbf{v}}\cdot \mathbf{n} 
\end{aligned}
\end{equation}
where $\overline{p} = {\overline{\rho}}/{\overline{\beta}} =
{\overline{p\theta}}/{\overline{\theta}} $.

\begin{theorem}The finite volume discretization of the  multi-component compressible Euler equations with inviscid numerical flux {\ref{ESDLfluxtot}} is Entropy-Stable, meaning that the total generalized entropy is non-increasing over time.
\end{theorem}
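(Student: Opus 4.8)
The plan is to verify Tadmor's entropy-conservation/stability criterion directly: show that the proposed flux $\mathbf{F}^{es^*}$ of \eqref{ESDLfluxtot} satisfies $[\![ \bm{v}(\mat{u})]\!] \cdot \mathbf{F}^{es^*} \le [\![ \bm{\psi}^*(\mat{u})\cdot \mathbf{n}]\!]$, where $\bm{\psi}^*$ is the entropy potential flux evaluated with the Double-Flux-frozen thermodynamic data ($\gamma^*$, $e_0^*$, $c_{vi}^*$). By \eqref{es34}--\eqref{L_ES} this inequality is exactly the semi-discrete statement that the generalized entropy is non-increasing, so the theorem reduces to this algebraic jump identity/inequality. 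I would first record, from Remark-level identities and \eqref{po1}, that in the Double-Flux setting the entropy potential is $\bm{\psi}^* = r(\mathbf{Y})\rho\,\bm{\mathbf{v}}$, so by Lemma~\ref{as44} the right-hand side is simply $r\,[\![\rho]\!]\,\bm{\mathbf{v}}\cdot\mathbf{n}$ up to the velocity/momentum contributions that must be tracked carefully.

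Next I would expand the left-hand side $[\![\bm{v}]\!]\cdot\mathbf{F}^{es^*}$ componentwise using the entropy variables \eqref{Esv} written in the frozen-$\gamma^*$ form: the species components pair $[\![\theta g_i]\!]$ (rewritten via Lemma~\ref{as46} as $e_{0i}^*[\![\theta]\!] + c_{vi}^*[\![\ln\theta]\!] + r_i[\![\ln\rho_i]\!]$) against $\mathbf{f}_{1,i}^{es^*} = Y_i\rho^{\text{ln}}\,\overline{\mathbf{v}}\cdot\mathbf{n}$; the density/velocity/energy components pair the remaining entries of $[\![\bm{v}]\!]$ against $\mathbf{f}_1^{es^*}$, $\mathbf{f}_2^{es^*}$, $\mathbf{f}_3^{es^*}$. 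The momentum and kinetic-energy pieces should telescope exactly as in the standard Chandrashekar/Gouasmi entropy-conservation proof — the term $\overline{p}\,\mathbf{n}$ in $\mathbf{f}_2^{es^*}$ paired with $[\![\bm{\mathbf{v}}/T]\!]$ and the $\overline{p}\,\overline{\mathbf{v}}\cdot\mathbf{n}$ term in $\mathbf{f}_3^{es^*}$ paired with $-[\![\theta]\!]$ combine (using $\overline{p}=\overline{\rho}/\overline{\beta}=\overline{p\theta}/\overline{\theta}$) to reproduce the pressure part of the potential-flux jump. Lemma~\ref{as47} is what converts the $\sum_i(e_{0i}^* + c_{vi}^*/\theta^{\text{ln}})Y_i\rho^{\text{ln}}$ combination appearing in the energy flux $\mathbf{f}_3^{es^*}$ into the closed form $(e_0^* + 1/((\gamma^*-1)\beta^{\text{ln}}))\rho^{\text{ln}}$ that actually appears in \eqref{ESDLfluxtot}, so that all the $e_{0i}^*$ and $c_{vi}^*$ contributions cancel against the species-flux terms.

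After the exact cancellations, the residual — the amount by which the left side fails to equal the potential-flux jump — should collect precisely into the single term $\sum_{i=1}^n r_i Y_i[\![\ln Y_i]\!]\rho^{\text{ln}}\,\overline{\mathbf{v}}\cdot\mathbf{n}$ coming from the mismatch between using $Y_i\rho^{\text{ln}}$ (mixture/Double-Flux form) rather than $\rho_i^{\text{ln}}$ (the genuinely entropy-conservative species mass flux of \eqref{ECflux}): indeed $[\![\ln\rho_i]\!] = [\![\ln\rho]\!] + [\![\ln Y_i]\!]$, and the $[\![\ln\rho]\!]$ part is absorbed by the conservative flux while the $[\![\ln Y_i]\!]$ part is the leftover. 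Lemma~\ref{as45} then gives that this leftover is non-positive, yielding the inequality $[\![\bm{v}]\!]\cdot\mathbf{F}^{es^*} \le [\![\bm{\psi}^*\cdot\mathbf{n}]\!]$ and hence entropy stability. (One should also remark that within a single Double-Flux time step the frozen data makes the scheme behave exactly like a perfect-gas entropy-conservative-plus-dissipation scheme, so the semi-discrete entropy balance is the honest statement; the non-conservative jump of $\gamma^*, e_0^*$ between steps is addressed separately in the Double-Flux update and does not enter this estimate.)

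The main obstacle I anticipate is bookkeeping rather than conceptual: tracking the direction-$\mathbf{n}$ velocity pairings so that the momentum/energy terms telescope cleanly, and making sure that the pressure-average choice $\overline{p}=\overline{\rho}/\overline{\beta}$ is exactly the one that makes the $[\![\theta]\!]$- and $[\![\bm{\mathbf{v}}/T]\!]$-paired pressure terms combine into $[\![p\,\bm{\mathbf{v}}\cdot\mathbf{n}]\!]$-type contributions consistent with $\bm{\psi}^* = r\rho\bm{\mathbf{v}}$. A secondary subtlety is confirming that \emph{all} of the non-cancelling species terms really do reduce to the single sign-definite quantity controlled by Lemma~\ref{as45}, with no stray $[\![\theta]\!]$ or $[\![\ln\theta]\!]$ residue — this is where Lemma~\ref{as46} and Lemma~\ref{as47} must dovetail exactly.
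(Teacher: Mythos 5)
Your proposal follows essentially the same route as the paper's proof: verifying the entropy dissipation condition \eqref{L_ES} by expanding $[\![\bm{v}]\!]\cdot\mathbf{F}^{es^*}-[\![\bm{\psi}\cdot\mathbf{n}]\!]$ componentwise, using the same four lemmas (\ref{as44}, \ref{as46}, \ref{as47} for the exact cancellations and \ref{as45} for the sign of the residual) and the same splitting $[\![\ln\rho_i]\!]=[\![\ln\rho]\!]+[\![\ln Y_i]\!]$ to isolate the leftover term $\sum_i r_i Y_i[\![\ln Y_i]\!]\rho^{\text{ln}}$. The argument is correct and matches the paper's; your closing remark about the frozen $\gamma^*$, $e_0^*$ data being handled separately in the Double-Flux update is consistent with how the paper treats it as well.
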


 \begin{proof}The flux $\mathbf{F}^{es^*}$ is symmetry and consistency follow from symmetry and consistency of the logarithmic mean and average operator. To prove that \ref{ESDLfluxtot} is Entropy-Stable \ref{L_ES} by following:
 \begin{equation}\label{po2}
[\![ \bm{\psi}(\mat{u})\cdot \mathbf{n}]\!] \overset{\ref{po1}}{=} [\![ r \rho \mathbf{v}\cdot \mathbf{n}]\!] \overset{\ref{rcpcv}}{=} 
(\overline{p\theta}[\![ \mathbf{v}]\!] + [\![ r \rho]\!] \overline{\mathbf{v}}) \cdot \mathbf{n} \overset{\ref{as44}}{=} 
\overline{p\theta}[\![ \mathbf{v}]\!] \cdot \mathbf{n} + r [\![ \rho]\!] \overline{\mathbf{v}} \cdot \mathbf{n} \overset{\ref{rcpcv}}{=} 
\overline{p\theta}[\![ \mathbf{v}]\!] \cdot \mathbf{n} + \sum_{i=1}^n Y_i r_i [\![ \rho]\!] \overline{\mathbf{v}} \cdot \mathbf{n}
\end{equation}
Using short notations for the flux components and the definition of $\mathbf{F}^{es^*}$ in \ref{ESDLfluxtot} we get 
\begin{equation}\label{ESDLfluxtot2}
\begin{aligned}
    [\![ \bm{v}(\mat{u})]\!] \cdot \mathbf{F}^{es^*}(\mat{u}_L, \mat{u}_R) &= 
    \sum_{i=1}^{n - 1} [\![ \theta (g_i - g_n)]\!] \mathbf{f}_{1,i}^{es^*} +[\![\theta g_n - \frac{\mathbf{v} \cdot \mathbf{v}}{2} \theta]\!] \mathbf{f}_{1 }^{es^*}
     + [\![ \theta \mathbf{v} ]\!] \cdot \mathbf{f}_{2 }^{es^*}- [\![ \theta ]\!] \mathbf{f}_{3 }^{es^*}\\
    & \overset{\ref{as46}}{=} \sum_{i=1}^{n - 1}\left((e_{0i}^*-e_{0n}^*)[\![ \theta ]\!] + (c_{vi}^* -c_{vn}^*)  [\![\text{ln} \theta ]\!] + r_i [\![\text{ln} \rho_i]\!]  - r_{n} [\![\text{ln} \rho_n]\!] \right)\mathbf{f}_{1,i}^{es^*} \\
    &+ \left( e_{0n}^* [\![\theta]\!] + c_{vn}^* [\![\text{ln} \theta]\!] +r_n [\![\text{ln} \rho_n]\!]  \right) \mathbf{f}_{1 }^{es^*} - \left(\frac{\overline{\mathbf{v} \cdot \mathbf{v}}}{2}[\![ \theta ]\!] + \overline{\theta} \overline{\mathbf{v}}\cdot[\![\mathbf{v} ]\!] \right) \mathbf{f}_{1 }^{es^*}\\
    &+\left(\overline{\theta}[\![\mathbf{v} ]\!] + \overline{\mathbf{v}}[\![\theta ]\!]\right) \cdot \mathbf{f}_{2 }^{es^*}- [\![ \theta ]\!] \mathbf{f}_{3 }^{es^*}\\
     & \overset{\ref{ESDLfluxtot}}{=} \sum_{i=1}^{n}\left( e_{0i}^* [\![\theta]\!] + c_{vi}^* [\![\text{ln} \theta]\!] +r_i [\![\text{ln} \rho_i]\!]  \right) \mathbf{f}_{1,i}^{es^*}\\
     &- \left(\frac{\overline{\mathbf{v} \cdot \mathbf{v}}}{2}[\![ \theta ]\!] + \overline{\theta} \overline{\mathbf{v}}\cdot[\![\mathbf{v} ]\!] \right) \mathbf{f}_{1 }^{es^*}
    +\left(\overline{\theta}[\![\mathbf{v} ]\!] + \overline{\mathbf{v}}[\![\theta ]\!]\right) \cdot \mathbf{f}_{2 }^{es^*}- [\![ \theta ]\!] \mathbf{f}_{3 }^{es^*}.
    \end{aligned}
\end{equation}
Using Equations  \ref{ESDLfluxtot2} and \ref{po2} together with the observation that $[\![\text{ln} \rho_i]\!] = [\![\text{ln} Y_i]\!] + [\![\text{ln} \rho]\!]$,  $[\![ \bm{v}(\mat{u})]\!] \cdot \mathbf{F}^{es^*}(\mat{u}_L, \mat{u}_R) - [\![ \bm{\psi}(\mat{u})\cdot \mathbf{n}]\!] $ becomes
\begin{equation}\label{ESDLfluxtot3}
\begin{aligned}
   &\sum_{i=1}^{n}  r_i[\![\text{ln} \rho]\!] \left(\mathbf{f}_{1,i}^{es^*} - Y_i \ \rho^{\text{ln}} \ \overline{\mathbf{v}} \cdot \mathbf{n} \right)  + \sum_{i=1}^{n}  r_i[\![\text{ln} Y_i]\!] \ \mathbf{f}_{1,i}^{es^*}
   + \overline{\theta}[\![\mathbf{v} ]\!] \cdot \left( \mathbf{f}_{2 }^{es^*} - \overline{\mathbf{v}} \cdot \mathbf{f}_{1 }^{es^*} - \frac{\overline{p\theta}}{\overline{\theta}} \mathbf{n}  \right)\\
   &+ [\![\text{ln} \theta]\!] \left(\sum_{i=1}^{n}\left(e_{0i}^* \ \theta^{\text{ln}}- c_{vi}^* \right)\mathbf{f}_{1,i}^{es^*} - \theta^{\text{ln}} 
   \left(\mathbf{f}_{3 }^{es^*} - \overline{\mathbf{v}} \cdot \mathbf{f}_{2}^{es} + \frac{\overline{\mathbf{v} \cdot \mathbf{v}}}{2}\mathbf{f}_{1 }^{es^*}\right) \right)
   \end{aligned}
\end{equation}
inserting $\mathbf{F}^{es^*}$ from \ref{ESDLfluxtot} into \ref{ESDLfluxtot3}, and using
also Equations \ref{as47}, \ref{as45}, so we obtain 
\begin{equation}\label{ESDLfluxtot4}
    [\![ \bm{v}(\mat{u})]\!] \cdot \mathbf{F}^{es^*}(\mat{u}_L, \mat{u}_R) - [\![ \bm{\psi}(\mat{u})\cdot \mathbf{n}]\!] = \sum_{i=1}^n r_i  Y_i [\![\text{ln} (Y_{i})]\!]  \rho^{\text{ln}} \  \leq 0
    \end{equation}
which ends the proof and the numerical flux $\mathbf{F}^{es^*}$ is Entropy-Stable. Note that for $n=1$, the numerical flux in entropy-consercative.
 \end{proof}
\begin{remark}Equation \ref{ESDLfluxtot} computes the numerical flux more efficiently than Equation \ref{ECflux}, requiring only two logarithmic mean evaluations instead of $n+1$. This makes it advantageous for $n>2$, as logarithmic mean calculations are computationally expensive. 
\end{remark}
\begin{remark}It is evident that \(\mathbf{F}^{es^*}\) satisfies the condition \ref{KEP}. Thus, the numerical flux in Equation \ref{ESDLfluxtot} is designed to preserve kinetic energy \cite{chandrashekar2013kinetic}.
\end{remark}

\subsection{Derivation of dissipation at interfaces for Entropy-Stable/Double-Flux method}

We here follow the Equation {\ref{KEPES1}} and look for a dissipation of the form {\ref{KEPES2}}. Different choices of the dissipation matrix $\mathbf{D}$ yield different numerical fluxes. Two commonly used approaches are the Lax-Friedrichs (LF) \cite{fleischmann2020low,badrkhani2023matrix} and Roe schemes \cite{roe1981approximate}. It is well established that the LF scheme introduces higher numerical dissipation than the Roe scheme. A hybrid dissipation formulation blends these schemes adaptively, applying the more dissipative LF term near strong shocks to enhance robustness while leveraging the less dissipative Roe term in smooth regions and near rarefaction waves or contact discontinuities to improve accuracy. This hybrid dissipation approach is defined as
 \begin{align}\label{KEPES3}
\mathbf{D} = \overline{\bm{R}} \: \vert  \overline{\bm{ \Lambda}} \vert  \:\overline{\bm{R}}^{-1},
\end{align}
where the diagonal matrix of eigenvalues is given by
\begin{equation}
\vert  \overline{\bm{ \Lambda}} \vert   =	(1-\theta)\,\vert  {\overline{ \bm{ \lambda}}} \vert  + \theta \,\vert  {{ \lambda}_{max}} \vert  \bm{I}, \qquad
\theta = \sqrt{\left| \frac{[\![p ]\!]}{2 \overline{p} } \right|}
\end{equation}
and the parameter $\theta \in [0,1]$ is defined using a simple local pressure indicator. Here, $\overline{\bm{R}}$ is the matrix of right eigenvectors, and $\overline{  \bm{ \lambda}}$ represents the diagonal matrix of eigenvalues of the inviscid flux Jacobian, which is $
{\overline{  \bm{ \lambda}} =  \text{diag}\left(\overline{\text{v}}_\text{n}, \cdots,\overline{\text{v}}_\text{n},\overline{\text{v}}_\text{n} -\overline{c},\overline{\text{v}}_\text{n}, \overline{\text{v}}_\text{n},\overline{\text{v}}_\text{n}, \overline{\text{v}}_\text{n} +\overline{c} \right)} $ and $\lambda_{max}$ denotes the largest eigenvalue of $\overline{  \bm{ \lambda}}$. Since only the mass flux and energy flux eigenvalues need to be equal to ensure kinetic energy dissipation \cite{chandrashekar2013kinetic}, we chose eigenvalues of the kinetic energy preserving with Entropy-Stable flux is form 
\begin{equation}
{\overline{  \bm{ \lambda}}} ={ \bm{ \lambda}^{KEPES}} = \text{diag}\left(\overline{\text{v}}_\text{n}, \cdots,\overline{\text{v}}_\text{n},\overline{\text{v}}_\text{n} +\overline{c},\overline{\text{v}}_\text{n}, \overline{\text{v}}_\text{n},\overline{\text{v}}_\text{n}, \overline{\text{v}}_\text{n} +\overline{c} \right). 
\end{equation}

The average components of the right eigenvectors are given by

\begin{equation}
\overline{\bm{R}}=
\begin{bmatrix}
1       & \cdots    & 0     & Y_1      & Y_1       & 0         & 0         & Y_1       \\
\vdots  & \ddots    &\vdots & \vdots   & \vdots    & \vdots    & \vdots    & \vdots  \\
0       &\cdots     &1      &Y_{n-1}  & Y_{n-1}    & 0         & 0           & Y_{n-1} \\
0 & \cdots      &0          &1 & 1 & 0 & 0 &   1 \\
0 & \cdots      &0          &\overline{\text{v}}_1 - \overline{c} \ \text{n}_x  & \overline{\text{v}}_1    & - \text{n}_y & - \text{n}_z & \overline{\text{v}}_1 + \overline{c} \ n_x \\
0 & \cdots      &0          &\overline{\text{v}}_2 - \overline{c} \ \text{n}_y & \overline{\text{v}}_2    & \ \  \text{n}_x & 0 & \overline{\text{v}}_2 + \overline{c} \ \text{n}_y \\
0 & \cdots      &0          &\overline{\text{v}}_3 - \overline{c} \ \text{n}_z & \overline{\text{v}}_3   &   0 &  \ \ \text{n}_x &  \overline{\text{v}}_3 + \overline{c} \ \text{n}_z \\
z_1 & \cdots    &z_{n-1}    &\overline{h} - \overline{c} \ \overline{\text{v}}_\text{n} & \frac{1}{2} \overline{{\Vert \text{v}_{tot} \Vert}^2} 
& \overline{\text{v}}_2 \text{n}_x -  \overline{\text{v}}_1 \text{n}_y
& \overline{\text{v}}_3 \text{n}_x -  \overline{\text{v}}_1 \text{n}_z
& \overline{h} \ + \overline{c} \ \overline{\text{v}}_\text{n}
\end{bmatrix}
\end{equation}
where
\begin{equation}
\overline{c} =	\sqrt \frac{\gamma^* \overline{p}}{\rho ^{\ln}}, \qquad
\overline{h} = e_0^* + \frac{\gamma^*}{ \beta ^{\ln} (\gamma^* - 1)} +  \frac{1}{2}  \overline{{\Vert \text{v}_{tot} \Vert}^2},\qquad
\overline{\text{v}}_\text{n}= \overline{\text{v}}_1 \text{n}_x + \overline{\text{v}}_2 \text{n}_y + \overline{\text{v}}_3 \text{n}_z 
\end{equation}
and $z_i =- \rho ^{\ln}	 \frac{\partial p}{\partial \rho Y_i}/\frac{\partial p}{\partial e}$. The pressure derivatives $\frac{\partial p}{\partial \rho Y_i}$ and $\frac{\partial p}{\partial e}$ are obtained by fixing other variables in the transformed EoS $p=p(\rho, e, \rho Y_1 , \dots,\rho Y_{n-1})$. 

To construct an Entropy-Stable numerical flux with matrix-based dissipation, a relationship between the entropy Jacobian $\mathbf{A}_0$ and the right eigenvectors $\overline{\bm{R}}$ is required. According to Barth's eigenvector scaling theorem \cite{barth1999numerical}, there exists a positive diagonal scaling matrix such that
\begin{equation}\label{KEPES4}
\mathbf{A}_0 = \overline{\bm{R}} \:\overline{\bm{T}} \ \overline{\bm{R}}^{T}
\end{equation}
where $\overline{\bm{T}}$ is defined based on reference \cite{gouasmi2020formulation}. However, the diagonal scaling matrix is defined based on the parameter \(\gamma^*\), whereas in reference \cite{gouasmi2020formulation}, it was originally defined using \(\gamma\). This modification allows for greater flexibility in adapting the scaling properties to doubel-flux model while maintaining consistency with the previous approach.

Combining \eqref{KEPES2}, \eqref{KEPES3}, and \eqref{KEPES4}, we express the hybrid dissipation term as
\begin{equation}\label{KEPES5}
\mathbf{D} \mathbf{A}_0 (\bm{v}_R - \bm{v}_L) = \overline{\bm{R}} \: \vert  \overline{\bm{ \Lambda}} \vert  \:\overline{\bm{R}}^{-1} ( \overline{\bm{R}} \:\overline{\bm{T}} \ \overline{\bm{R}}^{T})(\bm{v}_R - \bm{v}_L) = \overline{\bm{R}} \: \vert  \overline{\bm{ \Lambda}} \vert  \:\overline{\bm{T}} \ \overline{\bm{R}}^{T} (\bm{v}_R - \bm{v}_L).
\end{equation}
Thus, the Entropy-Stable/Double-Flux flux formulation is given by
 \begin{align}\label{KEPES6}
\mathbf{F}^{es/df}(\mathbf{u}_L, \mathbf{u}_R)= \mathbf{F}^{es^*}(\mathbf{u}_L, \mathbf{u}_R) 
-\frac{1}{2}
  \overline{\bm{R}} \: \vert  \overline{\bm{ \Lambda}} \vert  \:\overline{\bm{T}} \ \overline{\bm{R}}^{T} (\bm{v}_R - \bm{v}_L) .
\end{align}
This formulation guarantees entropy stability by enforcing the entropy inequality in a discrete form. To demonstrate this, we contract the semi-discrete approximation of the multi-component compressible equations with the entropy variables \eqref{Esv}. The entropy stability of the flux \eqref{KEPES6} results in the discrete entropy inequality.
\begin{equation}\label{KEPES7}
 \begin{aligned}
\dfrac{\partial U}{\partial t} + \nabla  \cdot \boldsymbol{\mathcal{F}} &\leq  [\![ \bm{v}(\mat{u})]\!] \cdot \mathbf{F}^{es^*}(\mat{u}_L, \mat{u}_R) - [\![ \bm{\psi}(\mat{u})\cdot \mathbf{n}]\!]\\
&-\frac{1}{2} (\bm{v}_R - \bm{v}_L)^{T} \:
  \overline{\bm{R}} \: \vert  \overline{\bm{ \Lambda}} \vert  \:\overline{\bm{T}} \ \overline{\bm{R}}^{T} (\bm{v}_R - \bm{v}_L) \leq 0
  \end{aligned}
 \end{equation} 
Since the last term right-hand side of \eqref{KEPES7} is a quadratic form scaled by a negative factor, the entropy inequality is discretely satisfied.

\section{Implementation aspects \label{sec:Implementation aspects}}

This section outlines the implementation of the Entropy-Stable/Double-Flux solver in OpenFOAM, integrating Cantera for chemical kinetics, thermodynamics, transport properties, and Adaptive Mesh Refinement (AMR) for efficiency. The solver employs a density-based approach for accurate shock capturing. AMR enhances computational performance by dynamically refining the mesh based on error criteria. The section also details the high-performance computing setup used for simulations.

\subsection{Solver structure}
The developed Entropy-Stable/Double-Flux solver is represented by a coupling library implemented in OpenFOAM \cite{weller_1998} which connects with the chemical program Cantera \cite{goodwin2002cantera} and local mesh refinement (AMR) \cite{heylmun2023}. Density-based solvers are crucial for accurately simulating compressible flows, particularly in scenarios involving strong shocks like those found in turbomachinery. Unlike pressure-based methods, density-based algorithms excel in shock-capturing capabilities, making them the preferred choice for transonic and hypersonic flow calculations. In our work, we adapted the density-based solver from \cite{heylmun_blastfoamguide_2021} to develop our scheme. Figure~\ref{fig:flowchart} shows the schematic structure of the Entropy-Stable/Double-Flux solver.
\begin{figure}
    \centering
    {{\includegraphics[width=1.0\textwidth]{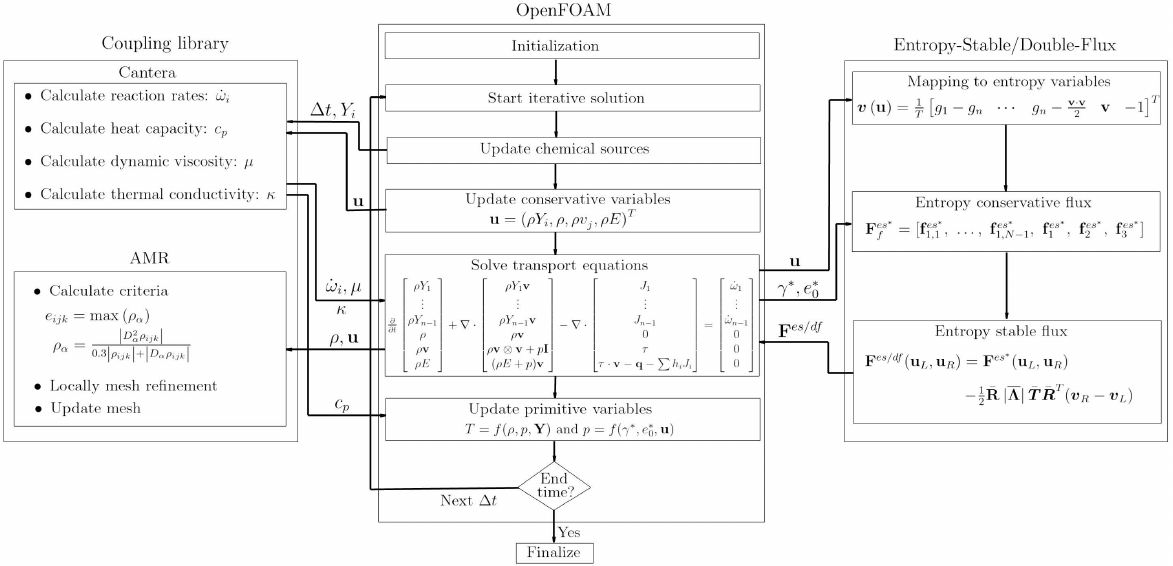} }}%
    \caption{Simplified flowchart illustrating the coupling of the density-based solver in OpenFOAM with Adaptive Mesh Refinement (AMR), the Cantera library, and the Entropy-Stable/Double-Flux solver.}%
    \label{fig:flowchart}%
\end{figure}

\subsection{Adaptive mesh refinement functionality}

With  Adaptive mesh refinement (AMR), the local spatial resolution can be dynamically controlled. This allows the maximization of the computational eﬃciency of the overall simulation as higher resolution is placed only where it is needed. An example grid structure with colored density gradients for the shock-wave and reacting helium gas cylinder interaction problem is shown in Figure\ref{fig:AMR_mesh}.

For multi-component equation flexibility, AMR’s single criterion refinement was extended to multiple different criteria. The criteria include uniquely selected suitable fields and  their gradients  features such as boxes or domain boundaries as well as the maximum and minimal refinement levels on each individual criterion. The refinement approach proposed by Sun and Takayama \cite{sun1999conservative} is employed to evaluate the maximum error. The error is quantified using a specific metric computed for each cell in a three-dimensional grid, defined as 
\begin{flalign}
  e_{ijk} = \max \left( \rho_{\alpha} \right), \quad \forall \alpha \in \{x, y, z, xy, yx, xz, zx, yz, zy\}, 
\end{flalign}
where the error components \(\rho_{\alpha}\) are computed using a general finite-difference stencil:  
\begin{equation}
    \rho_{\alpha} = \frac{\left| D_{\alpha}^2 \rho_{ijk} \right|}{0.3 \left| \rho_{ijk} \right| + \left| D_{\alpha} \rho_{ijk} \right|},
\end{equation}
with \(D_{\alpha}^2\) and \(D_{\alpha}\) representing the second- and first-order finite-difference approximations along the respective directions. The second-order difference is computed as  $D_x^2 \rho_{ijk} = \rho_{i-1,j,k} - 2 \rho_{i,j,k} + \rho_{i+1,j,k}$, with analogous expressions for other directions. The first-order difference is given by  
$ D_x \rho_{ijk} = \rho_{i+1,j,k} - \rho_{i-1,j,k}$. A block is marked for refinement when the highest error within the block,  
$
    e_{\max} = \max(e_{ijk}),
$
exceeds the refinement threshold \(e_{\text{ref}}=0.1\).

\begin{figure}
    \centering
    \includegraphics[width=0.80\textwidth]{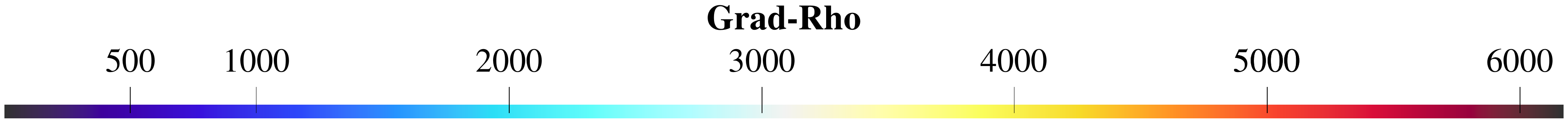}%
    \vspace{0.2cm} 
    {{\includegraphics[width=0.98\textwidth]{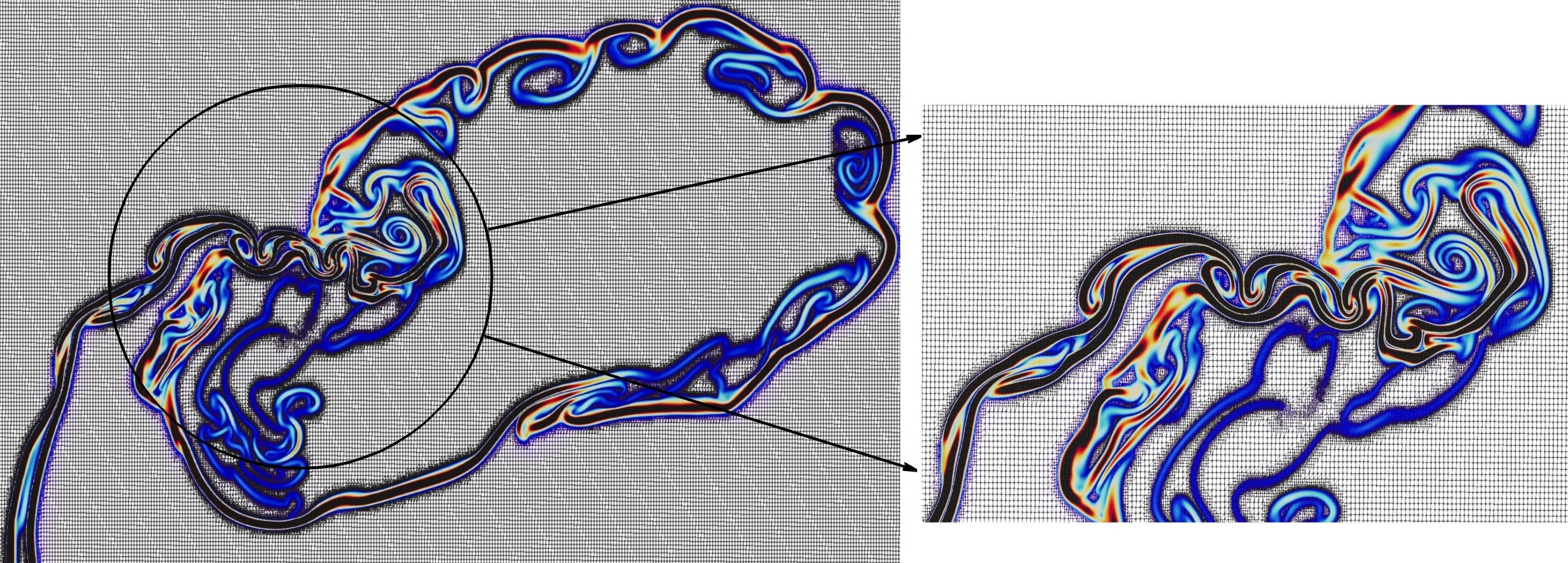} }}%
    \caption{A grid structure with colored  density gradients for the shock-wave and reacting helium bubble  interaction problem at $t=700 \: \mu \text{s}$ with three levels of refinement.}%
    \label{fig:AMR_mesh}%
\end{figure}

\section{Numerical results \label{sec:Numerical results}}

In this section, we demonstrate the effectiveness of our Entropy-Stable/Double-Flux approach through a series of test cases with increasing complexity. The goal is to validate the scheme's accuracy and stability across progressively more challenging compressible flow problems. We start with single-component 1D cases \ref{res1} to establish a baseline, then introduce multi-component 1D problems \ref{res2} to test the scheme’s handling of species interactions. We then extend to multi-component 2D shock-dominated flows \ref{res3}, and finally to a multi-component 3D underexpanded jet \ref{res4}, representing a complex, realistic flow scenario\footnote{Computational experiments were performed on the Lichtenberg II (Phase 1) cluster at TU Darmstadt using multiple nodes, each with dual Intel Xeon Platinum 9242 CPUs (48 cores, 2.3 GHz) and up to 384 GB RAM. The code was compiled with GCC 10.2.0, HWLOC 2.7.1, OpenMPI 4.0.2, and Cantera 3.1.0}.

This progression--from 1D to 3D and from single- to multi-component--allows for a clear evaluation of the scheme’s performance under growing numerical and physical challenges.

We use a third-order accurate strong stability preserving Runge-Kutta time integration scheme (SSP-RK3).To ensure numerical stability in a computational run, the time step $\Delta t$ is determined using the standard finite volume $CFL$ condition:
\begin{equation}
    \Delta t = CFL \times \frac{h}{\lambda^{\max}} 
\end{equation}
where $h$ is the characteristic cell length and $\lambda^{\max}$ represents the maximum wave speed at time step $n$ \cite{badrkhani2025matrix}. The $CFL$ number is a user-defined coefficient that regulates the stability of the simulation. In this study, all computations are performed with $CFL = 0.75$.

\subsection{Discontinuous profile on a periodic domain} \label{res1}
We examine the discrete evolution of entropy  in a single-component test case by evolving a discontinuous initial profile to final time $t= 2s$ on the domain [-1,1]. The initial conditions for density and velocity in non-dimensional form are prescribed as follows:
\begin{equation}
\rho\left( x,t\right) = \left \{\begin{array}{ll}
      3 \quad \vert x \vert < 0.5\\
      2 \quad \text{otherwise}
    \end{array}
  \right.,\qquad \mathbf{v}\left( x,t\right) =0,\qquad p\left(x,t \right) = \rho^{\gamma}.
\end{equation}
Periodic boundary conditions are enforced in order to examine the evolution of entropy over longer time periods. The spatial resolution has been fixed to 500. We examine the change in entropy over time. Numerical experiments in \citep{gassner2016well} suggest that the discrete change in entropy over time should converge to zero as the timestep decreases for Entropy-Conservative flux. We compute the convergence rate of $\Delta{s}(t) = s(x,t) - s(x,0) $ to zero with respect to the timestep $ \Delta{t}$, as shown in Figure \ref{fig:Error_S}.

For this test case, we consider the numerical flux $\mathbf{F}^{es/df} = \mathbf{F}^{es^*}$, as defined in Equation \ref{ESDLfluxtot}. To illustrate this, we analyze the flux for a single component ($n=1$) by setting $Y_1 = 1$ and utilizing Equation \ref{ESDLfluxtot4}. This allows us to express
\begin{equation}
    [\![ \bm{v}(\mat{u})]\!] \cdot \mathbf{F}^{es^*}(\mat{u}_L, \mat{u}_R) - [\![ \bm{\psi}(\mat{u})\cdot \mathbf{n}]\!] = \sum_{i=1}^n r_i Y_i [\![\ln(Y_i)]\!]  \rho^{\ln} = 0.
\end{equation}
This result confirms that, for this test case, the numerical flux $\mathbf{F}^{es^*}$ is Entropy-Conservative. Furthermore, Figure \ref{fig:Error_S} provides numerical evidence supporting this conclusion.

\begin{figure}
    \centering
    {{\includegraphics[width=0.6\textwidth]{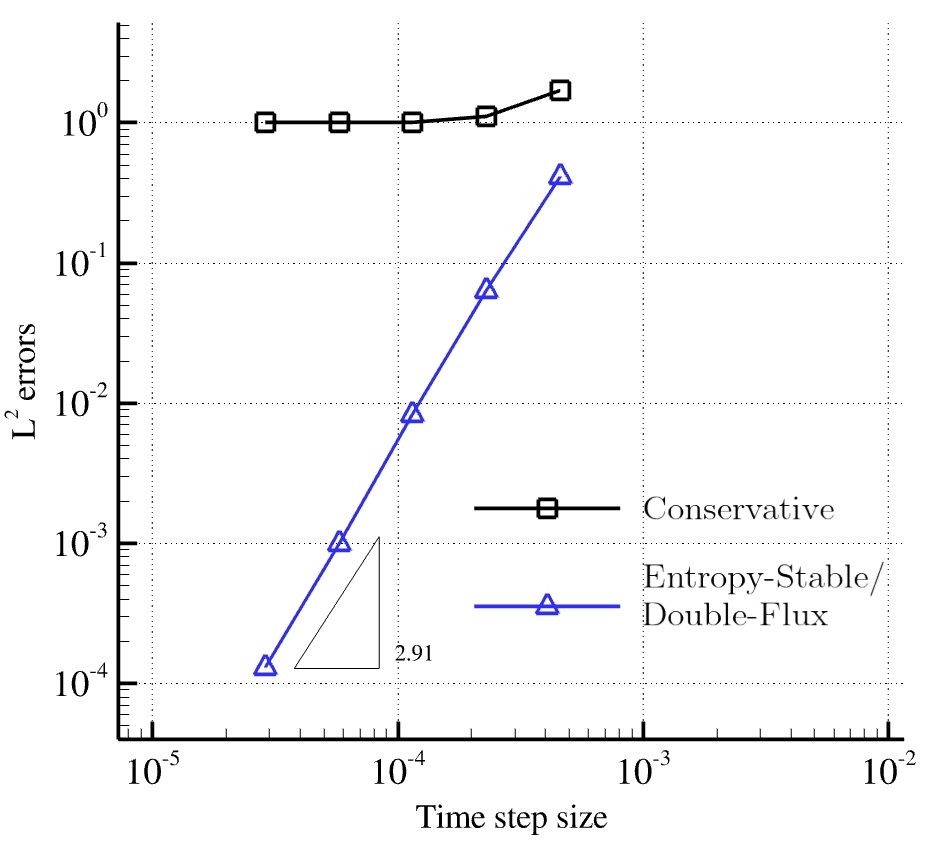} }}%
    \caption{Change in entropy $s(t)$ for both Entropy-Stable/Double-Flux and conservative method using a Lax-Friedrichs (LF) fluxes. The convergence of the change in entropy $s(t)$ at the final time $t = 2s$ converges to zero as $O\left( \Delta{t}^{2.91}\right) $ which is the 3rd order time-stepper used.}%
    \label{fig:Error_S}%
\end{figure}

\subsection{Moving interface} \label{res2}

The next test problem is the advection of a contact discontinuity (constant velocity and constant pressure) separating two different species. This example illustrates the classic oscillation phenomena when encountering a discontinuous $\gamma$ with the Entropy-Stable scheme, as well as how this is overcome by the Entropy-Stable/Double-Flux method. Since the Double-Flux method is already known to prevent oscillations in this test case \cite{houim2011low, lv2014discontinuous}, we focus on comparing the Entropy-Stable and Entropy-Stable/Double-Flux methods. This highlights the specific effect of the Double-Flux correction when applied within the Entropy-Stable framework. The initial conditions are given by:

\begin{equation}
\begin{array}{ll}
Y_i\left( x,t=0\,\mathrm{s}\right) = \left \{\begin{array}{ll}
      Y_{H_{2}}=1  \quad 0\,\mathrm{m} < x  < 0.05\,\mathrm{m}\\
      Y_{N_{2}}=1 \quad \text{otherwise}
    \end{array}
  \right.,\quad \mathbf{v}\left( x,t=0\,\mathrm{s}\right) =100\,\mathrm{m/s},\\\\ 
  p\left(x,t=0\,\mathrm{s} \right) = 1\mathrm{atm}, \quad T\left(x,t=0\,\mathrm{s} \right) = 300\mathrm{K}.

\end{array}
\end{equation}
The entire domain ranges $[-0.05,0.5] \,\mathrm{m}$ with 1000 cells and a periodic boundary condition is employed. The pressure and velocity profiles at $t = 0.001 s$ are shown in Figure \ref{fig:Moving interface}. 

The observed overshoots and undershoots in the profiles are characteristic of conservative schemes \cite{abgrall2001computations} as well as Entropy-Stable methods \cite{jameson2008formulation}. These anomalies are inherent to multi-component flows and should be expected. However, the Double-Flux model uniquely maintains accurate pressure and velocity values \cite{wang2025adaptive}, free from any fluctuations, though this comes at the expense of introducing conservation errors \cite{houim2011low}. Notably, the combination of the Entropy-Stable method and the Double-Flux model is shown to accurately capture material interfaces without the occurrence of artificial oscillations. This stability is achieved by freezing the variables $\gamma^*$ and $e_0^*$ throughout the entire timestep, in conjunction with the Double-Flux algorithm within the Entropy-Stable framework.

\begin{figure}
    \centering
    \subfloat[\centering Pressure, $p$]{{\includegraphics[width=0.5\textwidth]{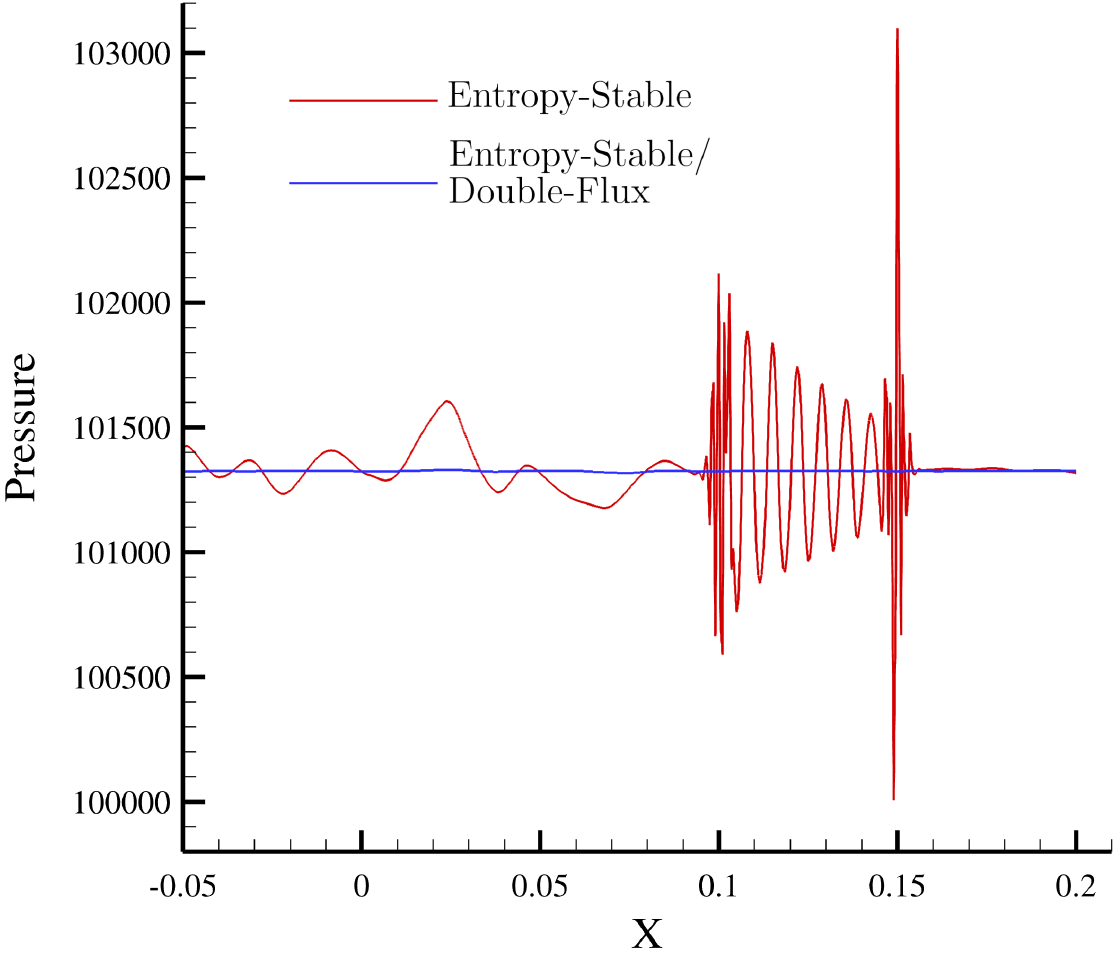} }}%
        \subfloat[\centering Velocity, $\mathbf{v}$]{{\includegraphics[width=0.5\textwidth]{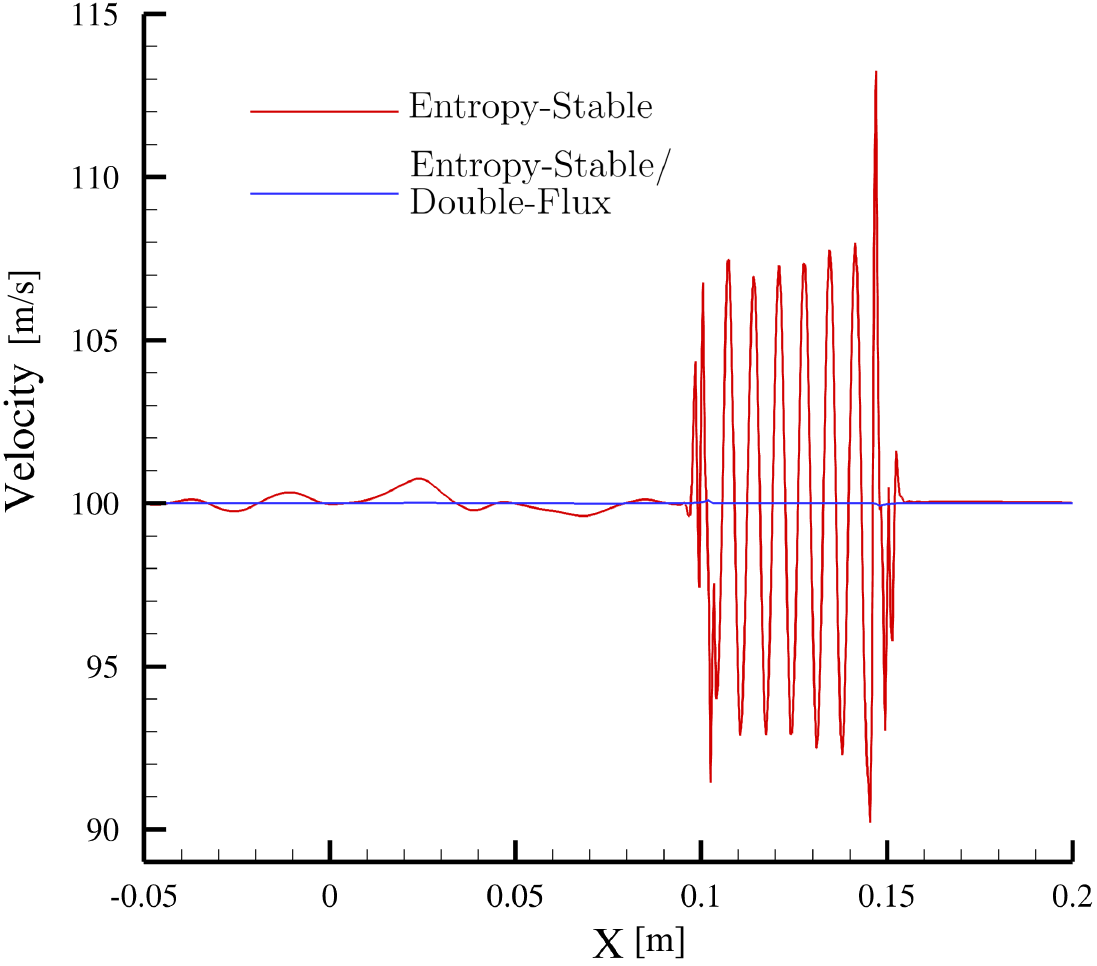} }}%
    \caption{Pressure and velocity profiles for the one-dimensional moving interface problem, $t = 0.001 s$.}%
    \label{fig:Moving interface}%
\end{figure}

\subsection{2D Shock-bubble interaction} \label{res3}

This study examines the two-dimensional interaction between a Mach 1.22 shock wave and a helium bubble suspended in air. The experimental configuration, originally investigated by Haas and Sturtevant \cite{haas1987interaction}, has been widely used for validating numerical schemes in compressible multiphase flow simulations \cite{johnsen2006implementation,marquina2003flux,quirk1996dynamics}. The computational domain is defined as $\Omega = (0,0.325)\: \text{m} \times (0,0.0455) \: \text{m}$. Initially, the shock is positioned at $x = 0.225 \: \text{m}$, while the helium bubble is centered at $x_0 = 0.175 \: \text{m}$. The domain boundaries impose slip-wall and symmetric conditions at the top and centerline, while supersonic inflow and outflow conditions are enforced at the right and left boundaries, respectively. The schematic bubble position and shock-wave location are shown in Figure~\ref{fig:HE_Testcase}. The initial conditions are summarized in Table~\ref{tab:initial_conditions}.

\begin{table}[h]
\captionsetup{justification=raggedright, singlelinecheck=false}
    \caption{Initial conditions for the two-dimensional shock-bubble interaction problem.}
    \label{tab:initial_conditions}
\begin{tabular*}{\textwidth}{l@{\extracolsep{\fill}}lccc}
        \toprule
        Quantity & Pre-shock air & Post-shock air & Helium bubble \\
        \midrule
        $\rho \: \:$ [kg/m$^3$]   & 1.29  & 1.7756        & 0.2347 \\
        $\text{v}_1$ [m/s]           & 0     & $(M_2 \: c_2 - M_1 \: c_1)$       & 0 \\
        $\text{v}_2$ [m/s]           & 0     & 0             & 0 \\
        $p \:\:$ [bar]           & 1     & $p_2/p_1$      & 1  \\
        $T\:$ [K]             & 300   & 300 $T_2/T_1$           & 300 \\
        $Y_{N_2}$        & 0.215   & 0.215           & 0.000 \\
        $Y_{O_2}$           & 0.785   & 0.785           & 0.000 \\
        $Y_{He}$            & 0.000   & 0.000           & 1.000 \\        \bottomrule
    \end{tabular*}
\end{table}
where $c = \sqrt{\gamma R T}$ and the normal shock ratios, $T_2/T_1$ and $p_2/p_1$, and post shock Mach number, $M_2$, can be calculated from the isentropic flow relations for air $\gamma= 1.4$ which yields, for a $M_1 = 1.22$ normal shock, $T_2/T_1 = 1.14054$, $p_2/p_1 =1.56979$, and $M_2 = 0.829986$.

Adaptive mesh refinement was used with 1-3 levels of refinement. The computational mesh consists of quadratic elements with three refinement levels $h =(100, 50, 25) \: \mu \text{m}$. Simulations employ SSP-RK3 to integrate the parabolic terms, with a temporal resolution of $\Delta t = 5 \times 10^{-9}$ s, corresponding to a maximum $CFL$ number of $0.75$ for the finest mesh. 
\begin{figure}
    \centering
    {{\includegraphics[width=0.99\textwidth]{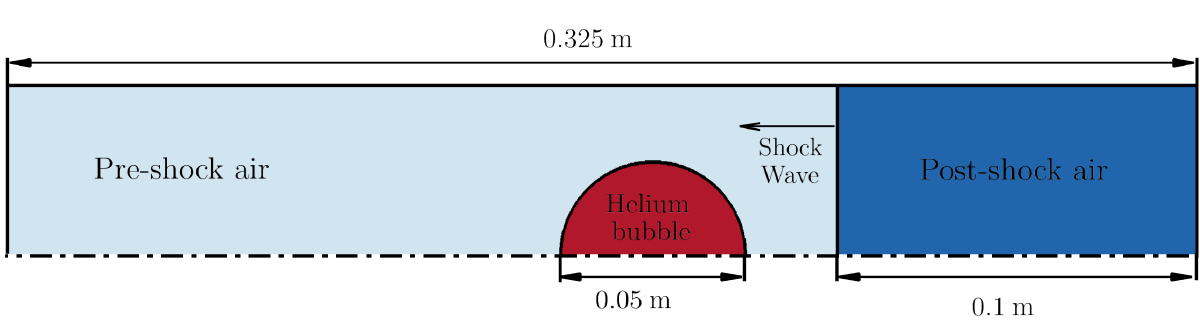} }}%
    \caption{Schematic of the two-dimensional Shock-bubble interaction problem.}%
    \label{fig:HE_Testcase}%
\end{figure}
\subsubsection{Simulation results and analysis}

Figure~\ref{fig:gradrho_HE} shows the helium mass fraction images and the density gradients at different times after the shock reached the bubble. At $t = 350 \: \mu \text{s}$, three key points in the solution are identified as the \textbf{downstream}, \textbf{jet}, and \textbf{upstream} locations, as shown in Figure~\ref{fig:gradrho_HE}. The \textbf{downstream point} corresponds to the leftmost position of the helium bubble, while the \textbf{jet point} marks its rightmost position along $y = 0$ m. The \textbf{upstream point} represents the bubble’s farthest right location within the entire domain. When the shock interacts with the helium bubble, these points shift over time, eventually leading to the merging of the jet and downstream locations. The trajectories of these points, depicted in Figure \ref{fig:Jet_Up_Down} for the Entropy-Stable/Double-Flux scheme with $h=25 \: \mu \text{m}$ solution, closely match the solid reference lines from \cite{terashima2009front}.

\begin{figure}
    \centering
    \includegraphics[width=0.40\textwidth]{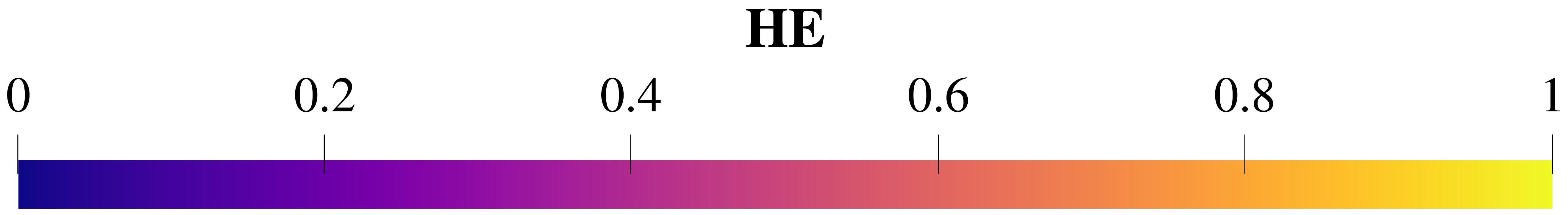} \hspace{1.52cm}
    \includegraphics[width=0.40\textwidth]{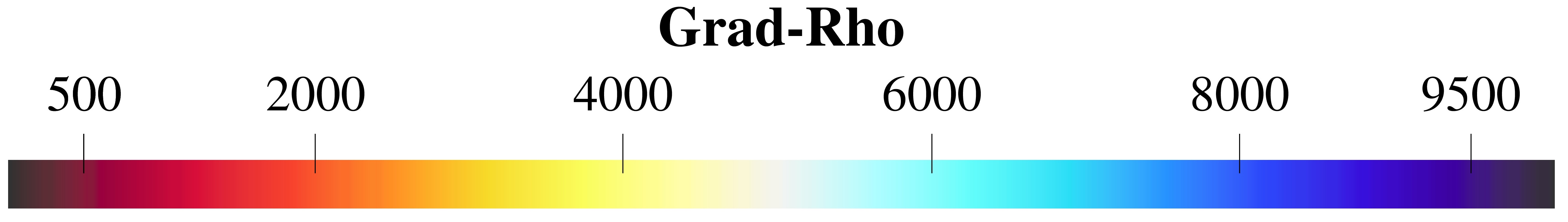}%
    \vspace{0.2cm} 
    \includegraphics[width=0.495\textwidth]{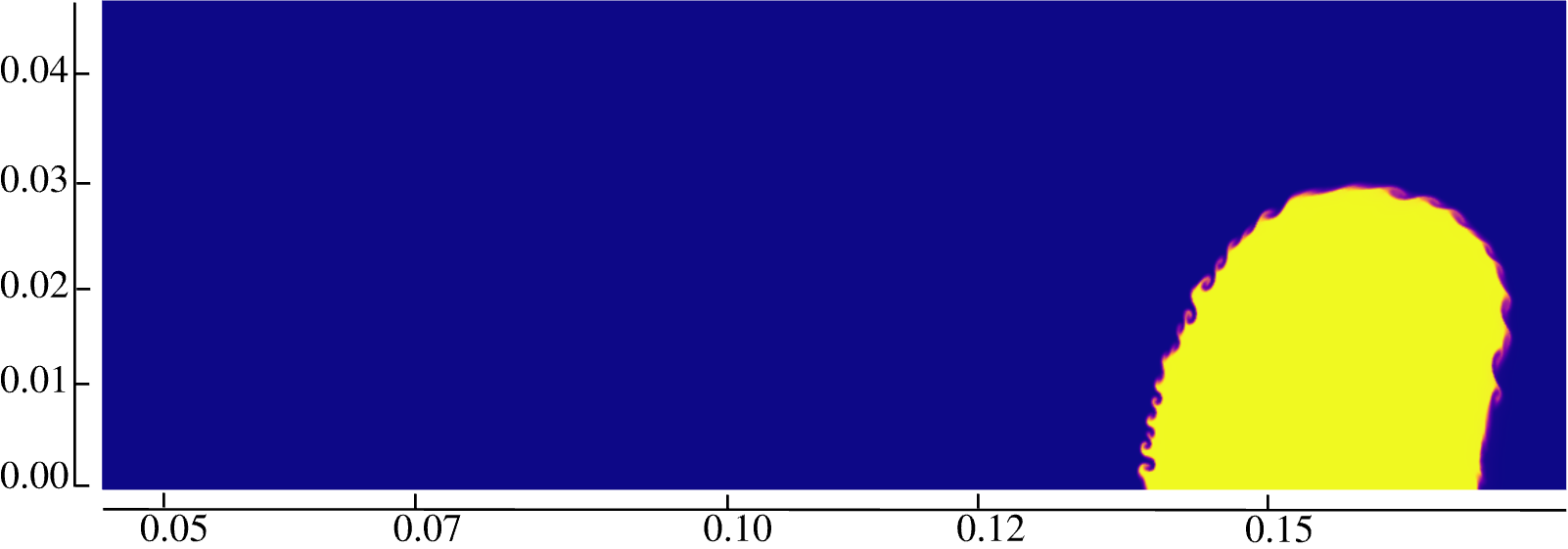}
    \includegraphics[width=0.495\textwidth]{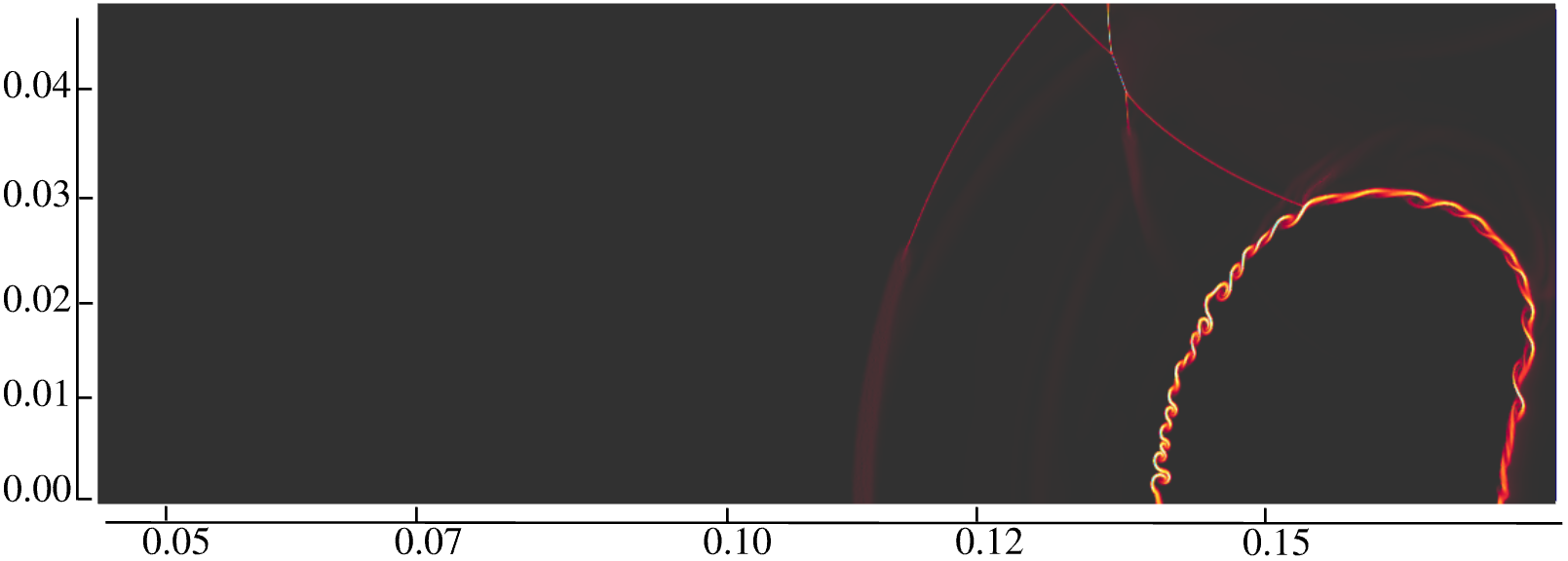}%
    \vspace{0.2cm} 
    \includegraphics[width=0.495\textwidth]{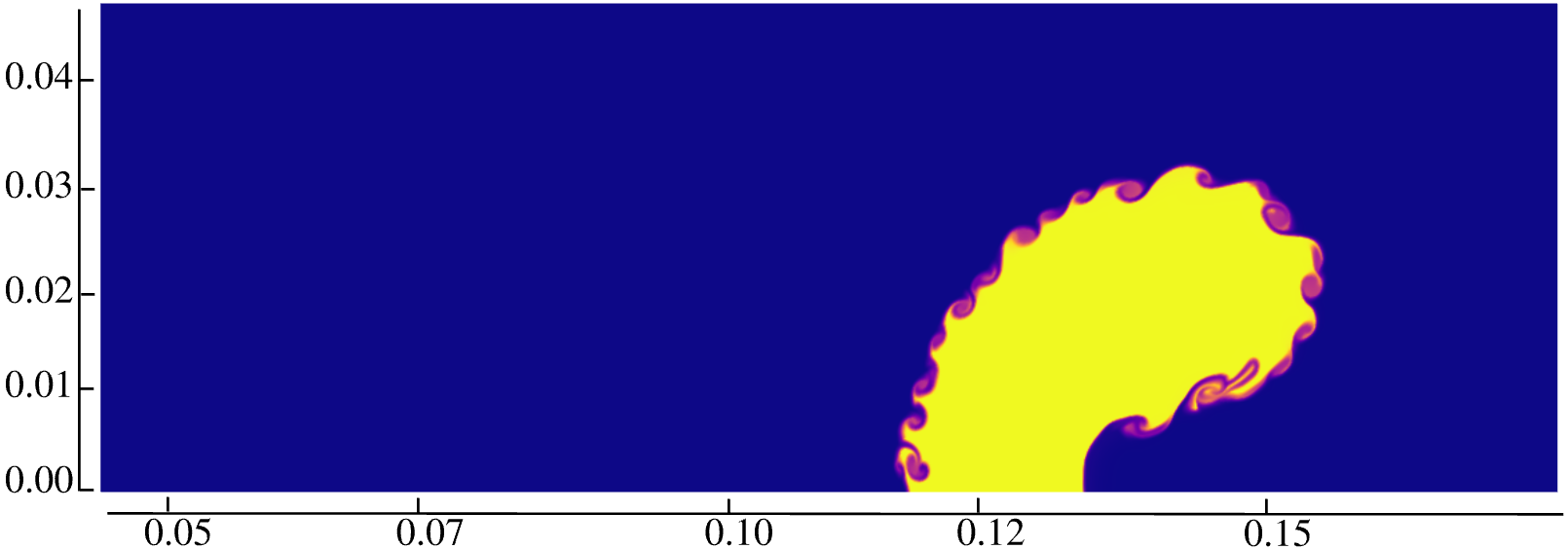}
    \includegraphics[width=0.495\textwidth]{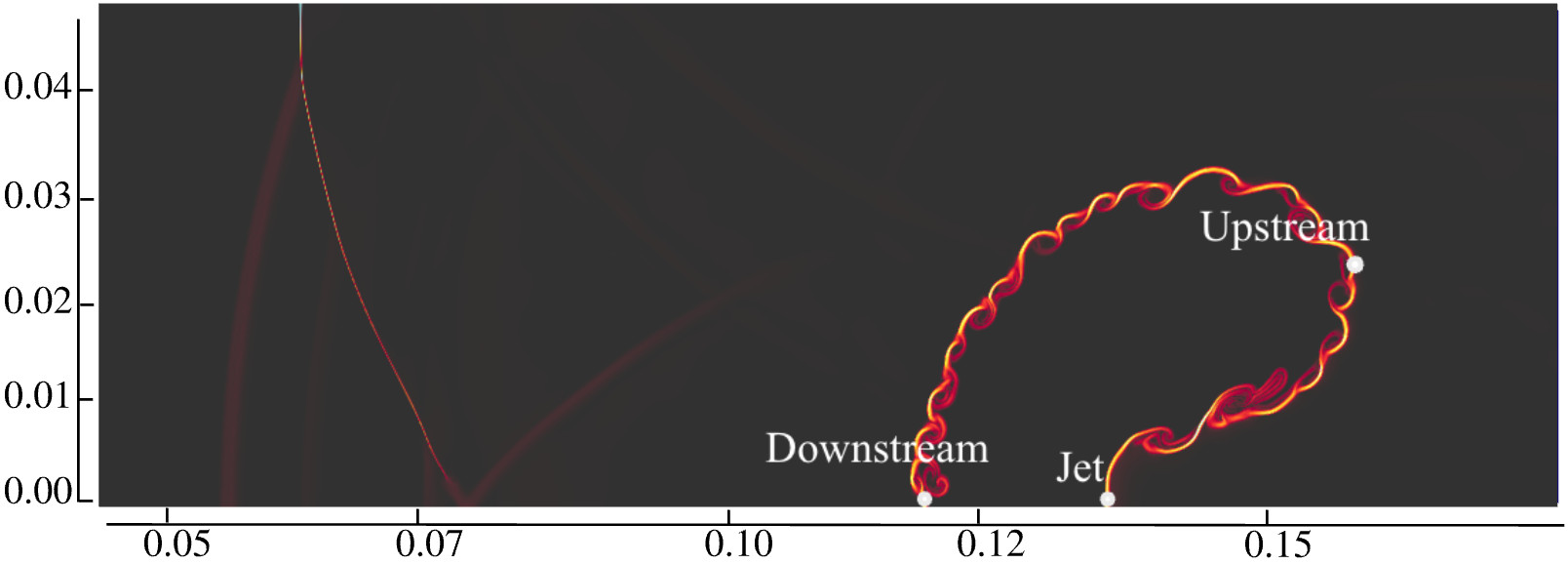}%
    \vspace{0.2cm} 
    \includegraphics[width=0.495\textwidth]{ 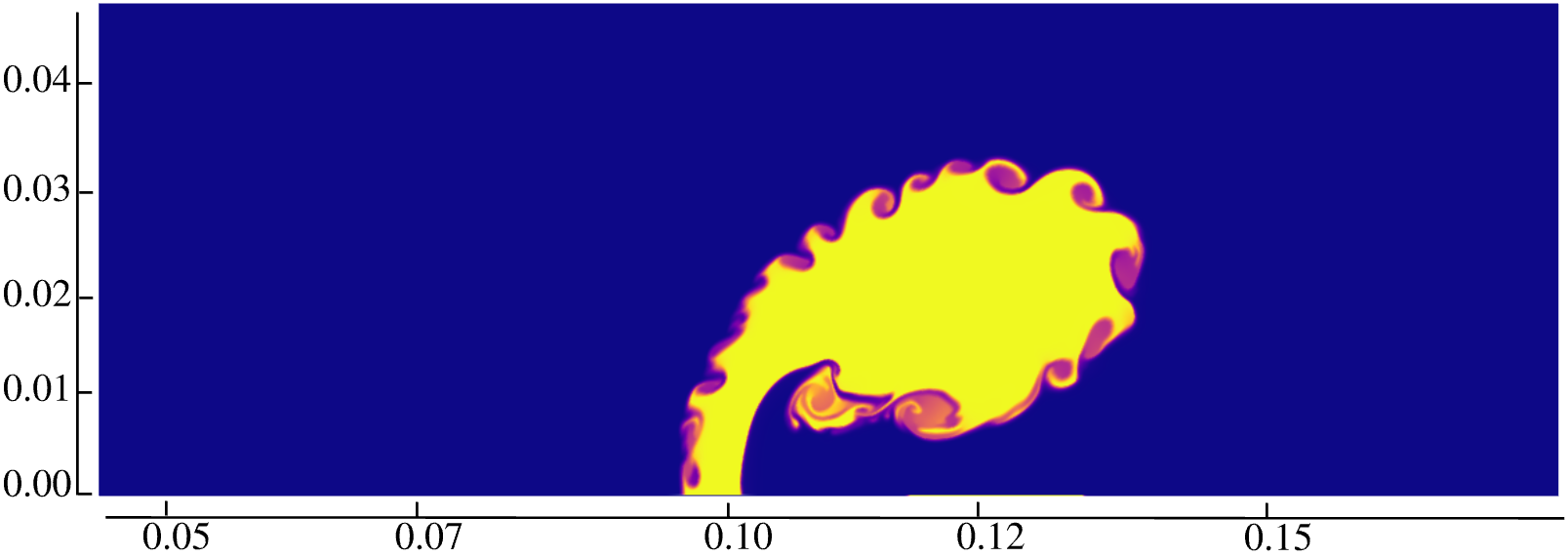}
    \includegraphics[width=0.495\textwidth]{ 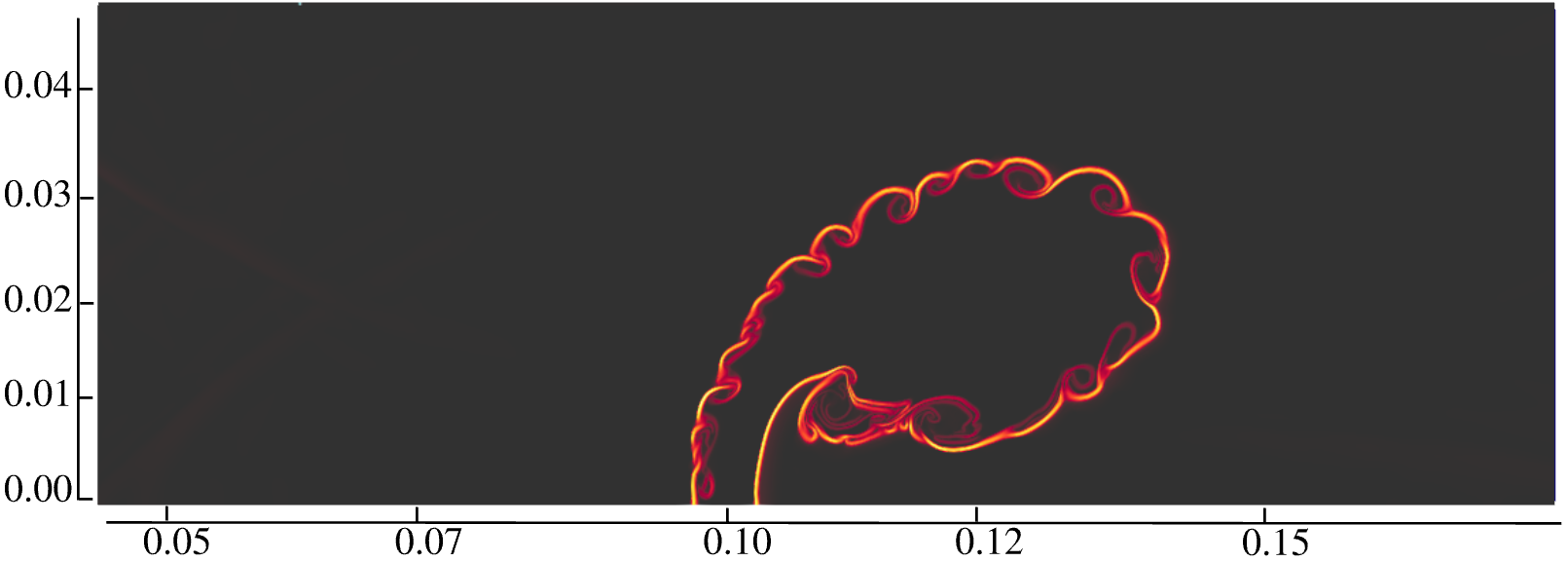}%
    \vspace{0.2cm} 
    \includegraphics[width=0.495\textwidth]{ 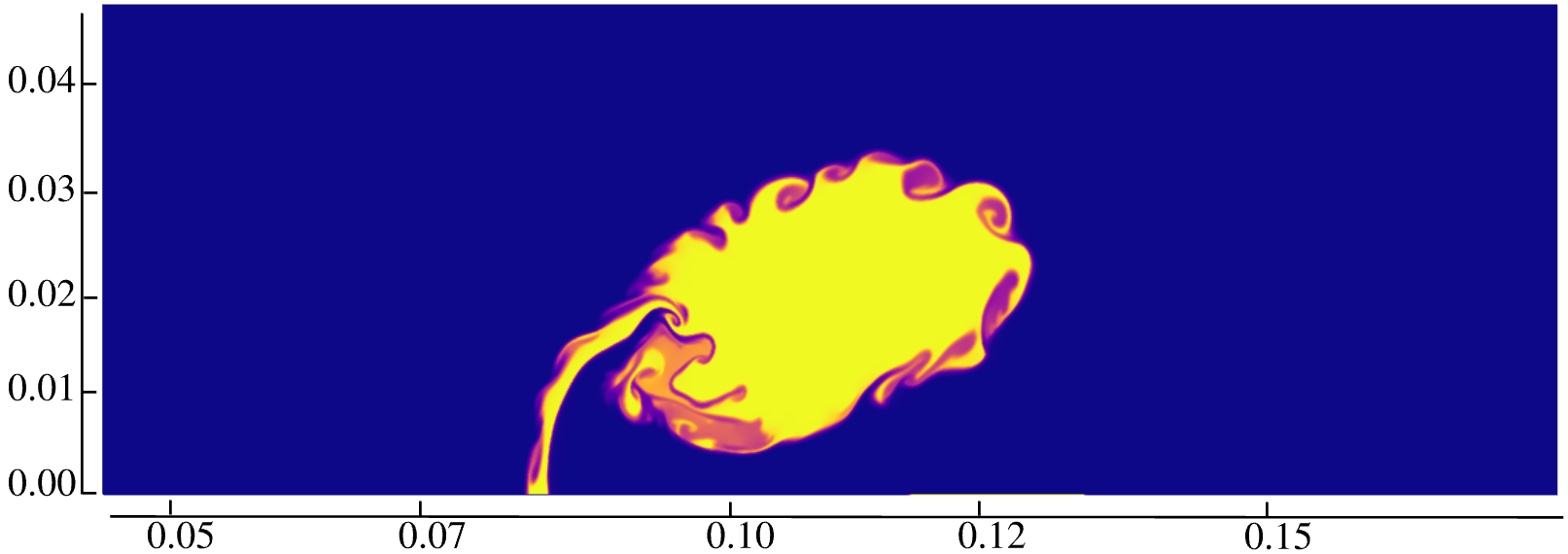}
    \includegraphics[width=0.495\textwidth]{ 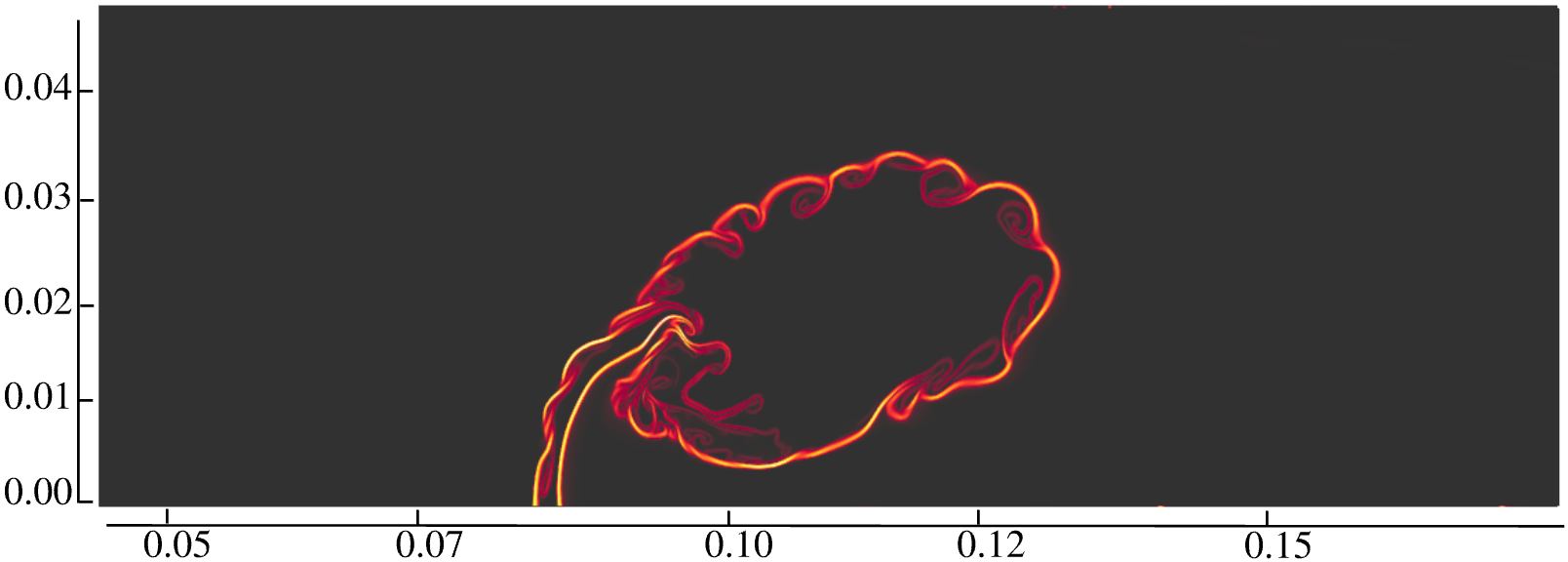}%
    \vspace{0.2cm} 
    \includegraphics[width=0.495\textwidth]{ 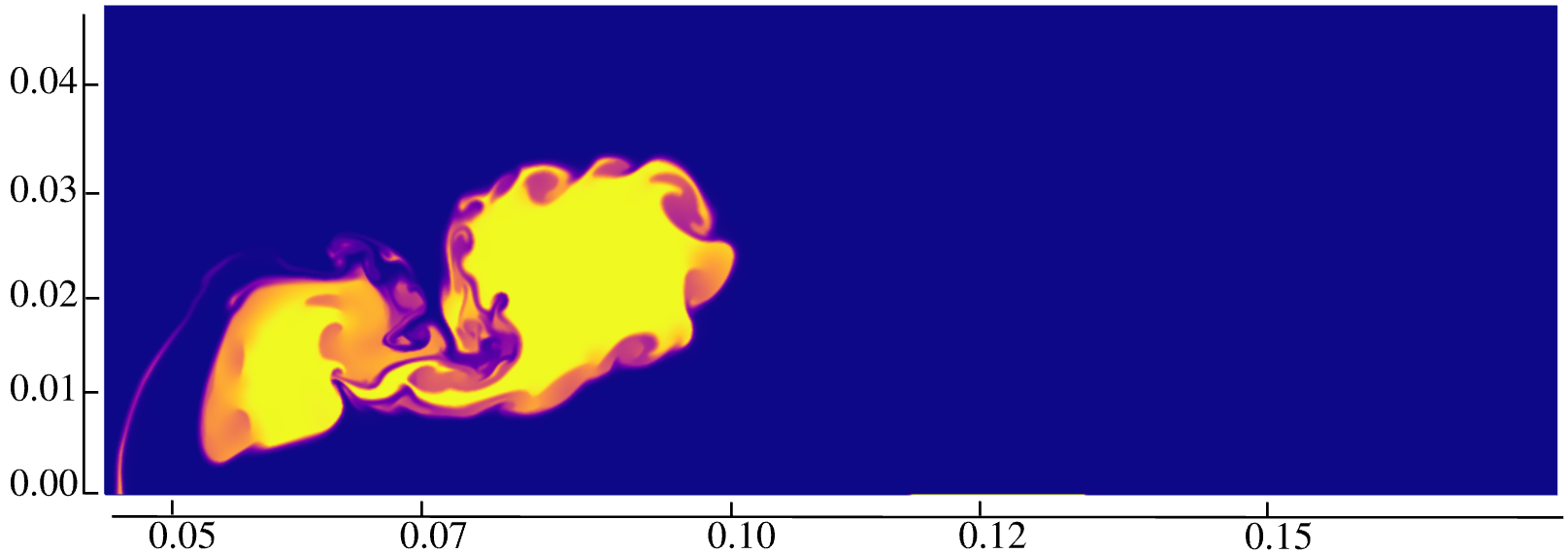}
    \includegraphics[width=0.495\textwidth]{ 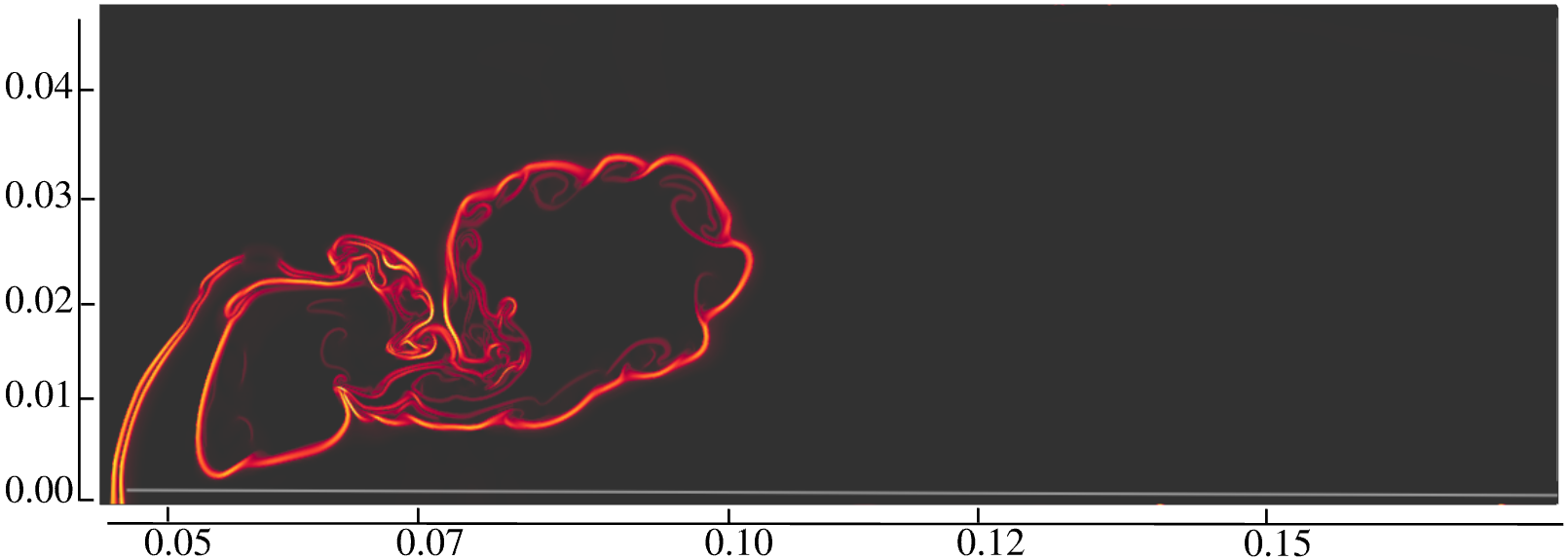} \caption{Evolution of helium mass fraction (left) and density gradient magnitude (right) for the interaction with the shock of the helium bubble in air test case computed with the Entropy-Stable/Double-Flux scheme. Time stamps from top to bottom: $t = 200 \: \mu \text{s}$, $t = 350 \: \mu \text{s}$, $t = 500 \: \mu \text{s}$, $t = 600 \: \mu \text{s}$ and $t = 850 \: \mu \text{s}$.}
    \label{fig:gradrho_HE}
\end{figure}

\begin{figure}
    \centering
    {{\includegraphics[width=0.49\textwidth]{ 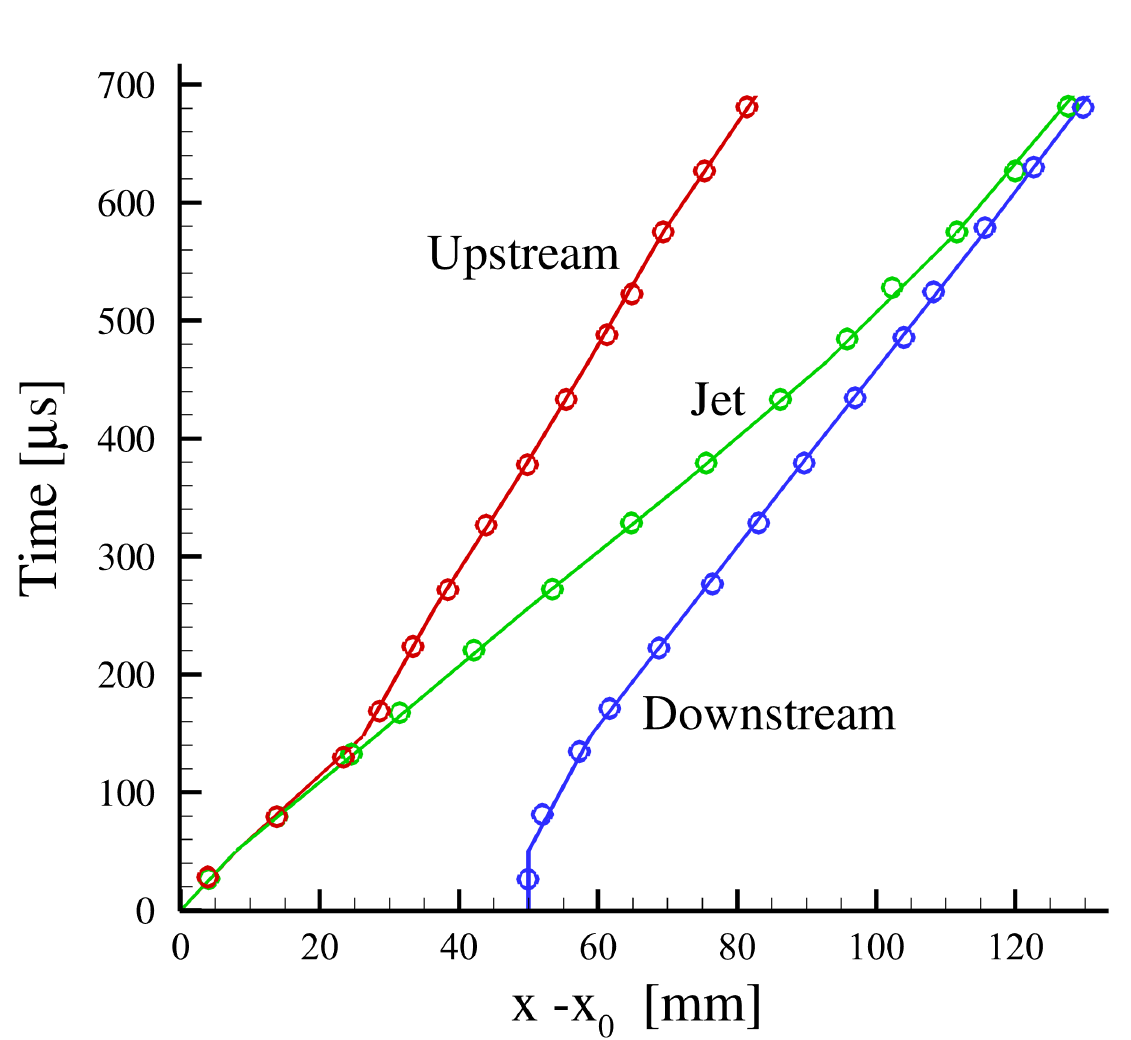} }}%
    \caption{Showing spatiotemporal evolution of characteristics points along the bubble interface from simulations (lines) with $h =25 \: \mu \text{m}$ and reported data by Terashima and Tryggvason (symbol circle) \cite{terashima2009front}.}%
    \label{fig:Jet_Up_Down}%
\end{figure}

\begin{figure}
    \centering
    \includegraphics[width=0.25\textwidth]{ 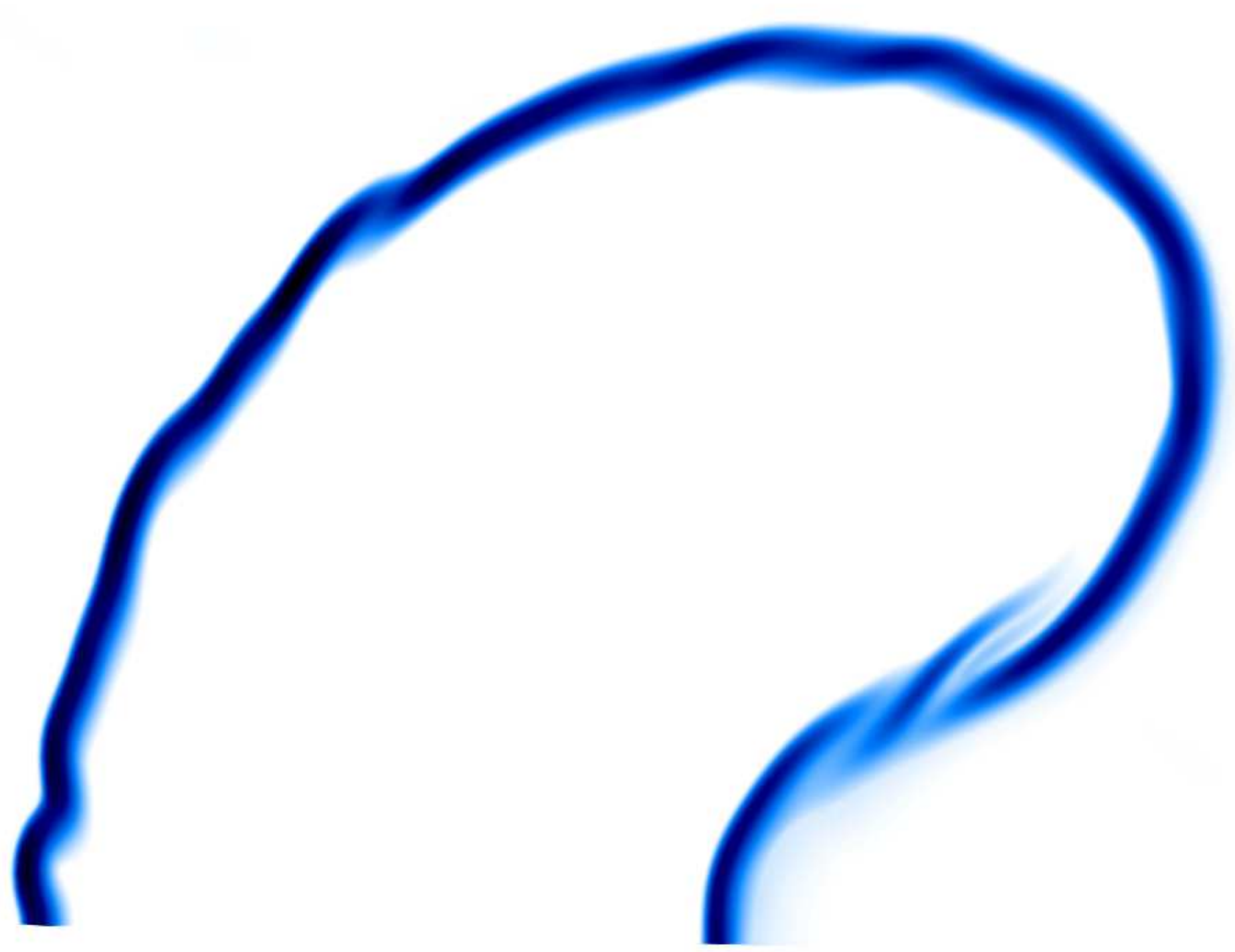}\hspace{0.45cm }    \includegraphics[width=0.25\textwidth]{ 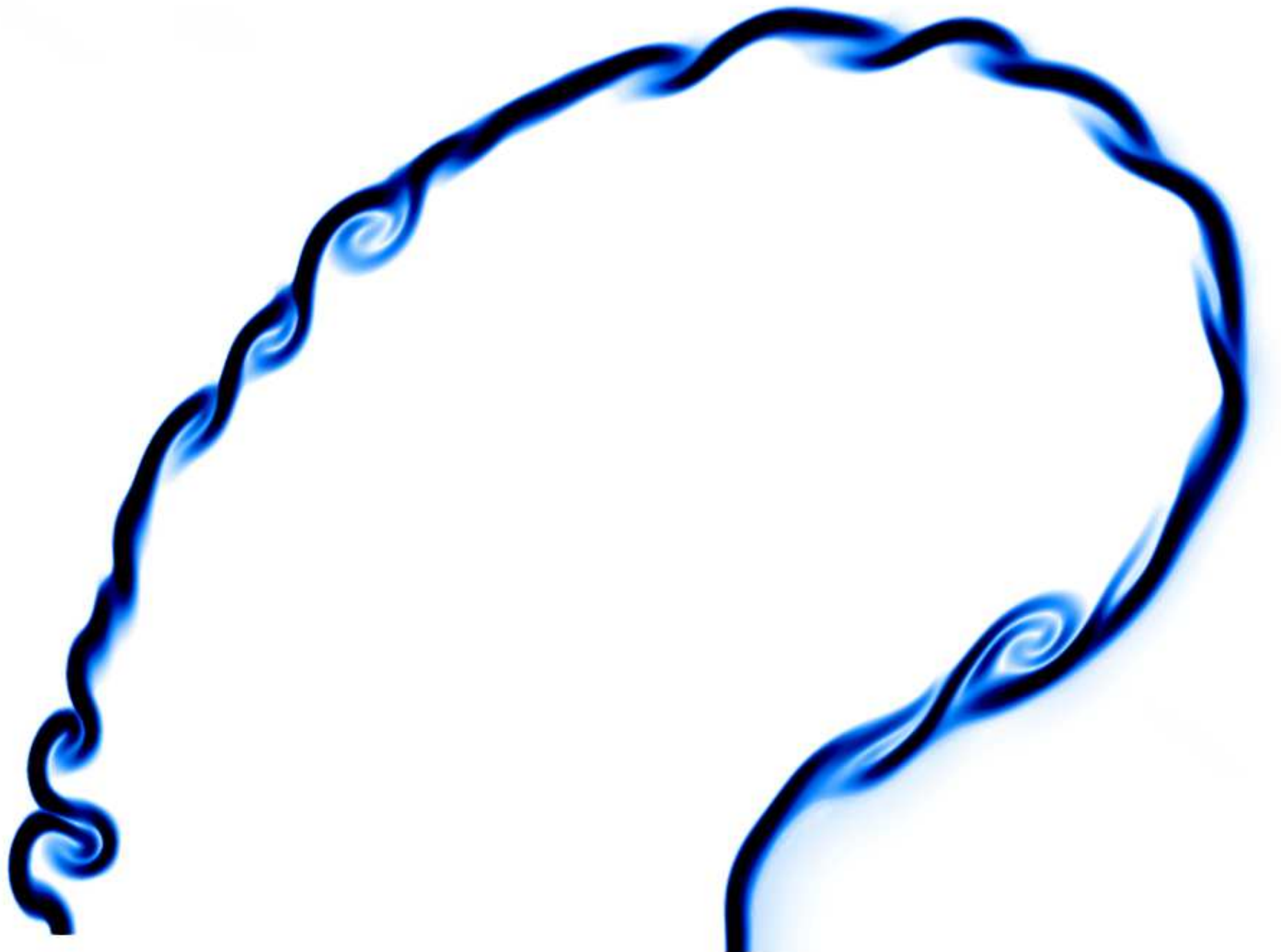}\hspace{0.45cm }    \includegraphics[width=0.25\textwidth]{ 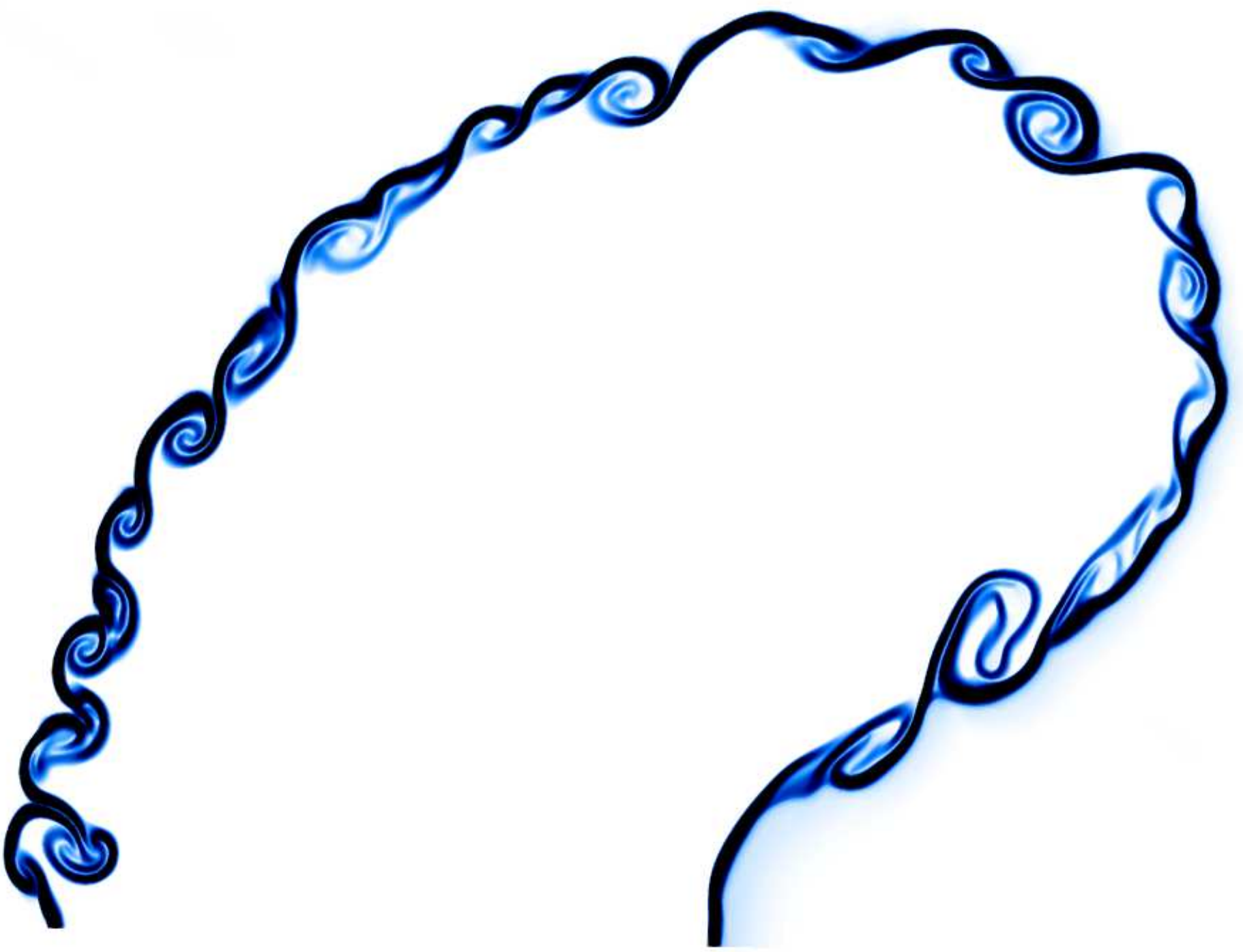}\\
    \includegraphics[width=0.25\textwidth]{ 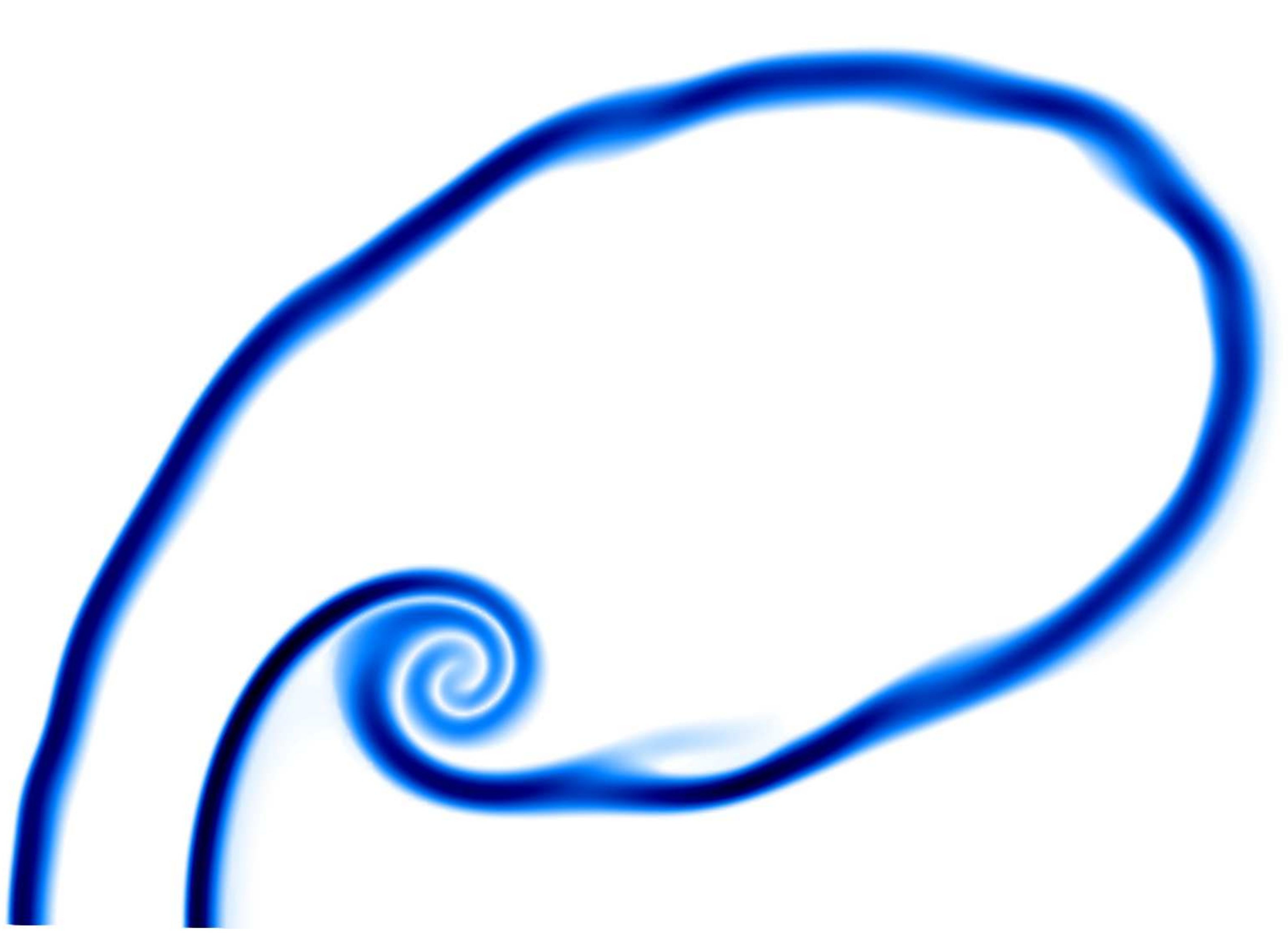}\hspace{0.45cm }    \includegraphics[width=0.25\textwidth]{ 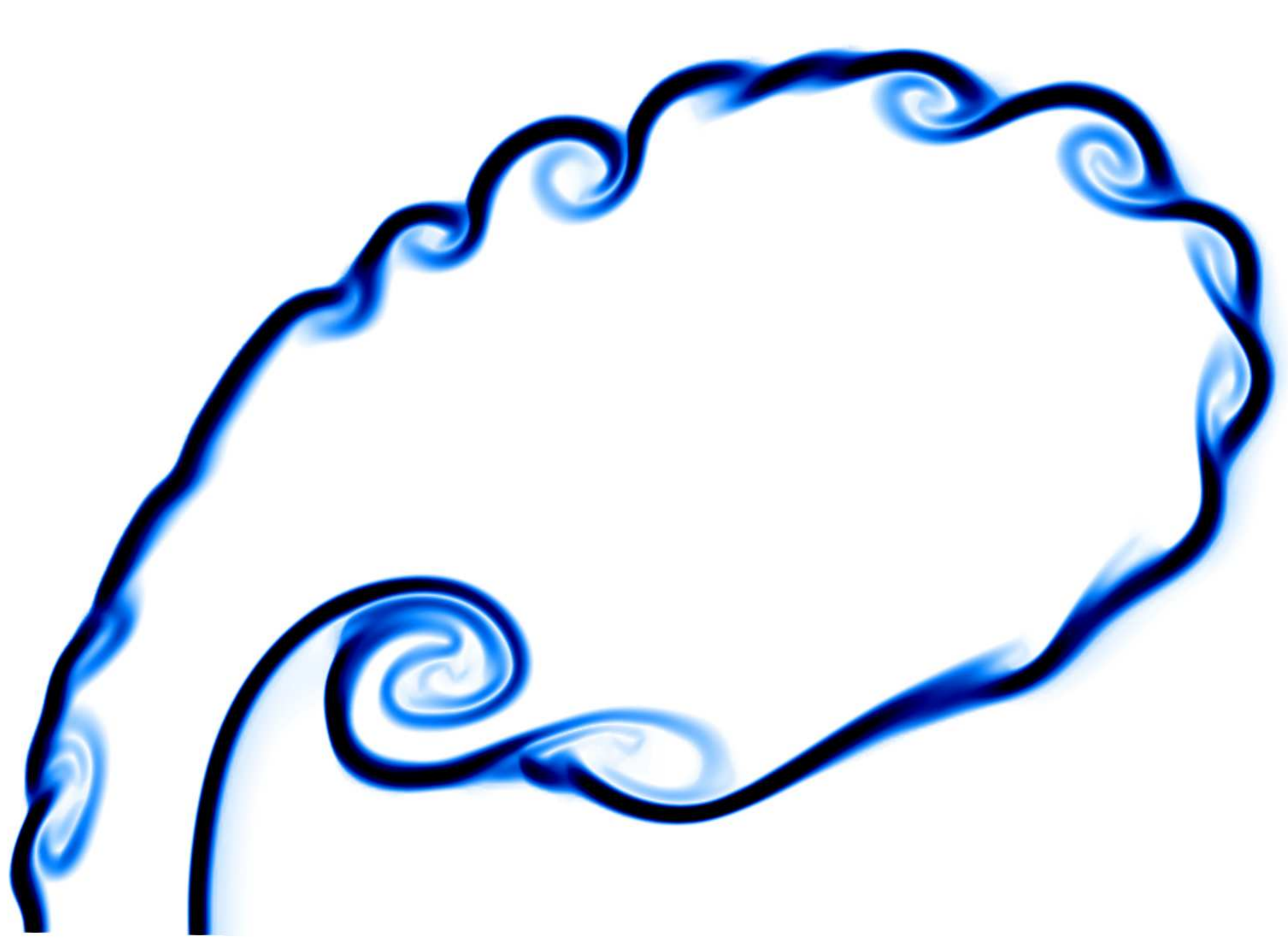}\hspace{0.45cm }    \includegraphics[width=0.25\textwidth]{ 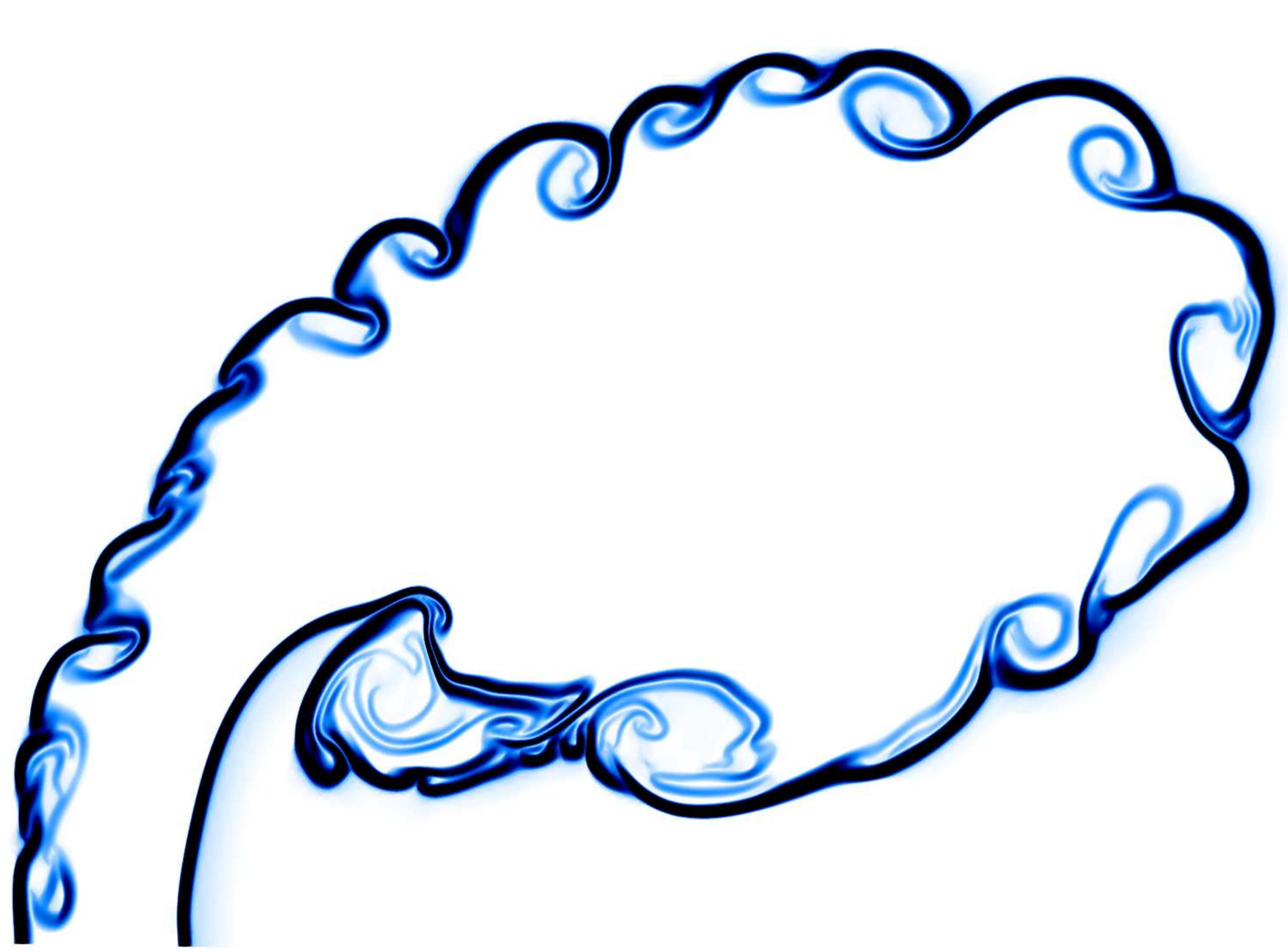}\\
    \subfloat[\centering Level 1, $h =100 \: \mu \text{m}$]{
    \includegraphics[width=0.25\textwidth]{ 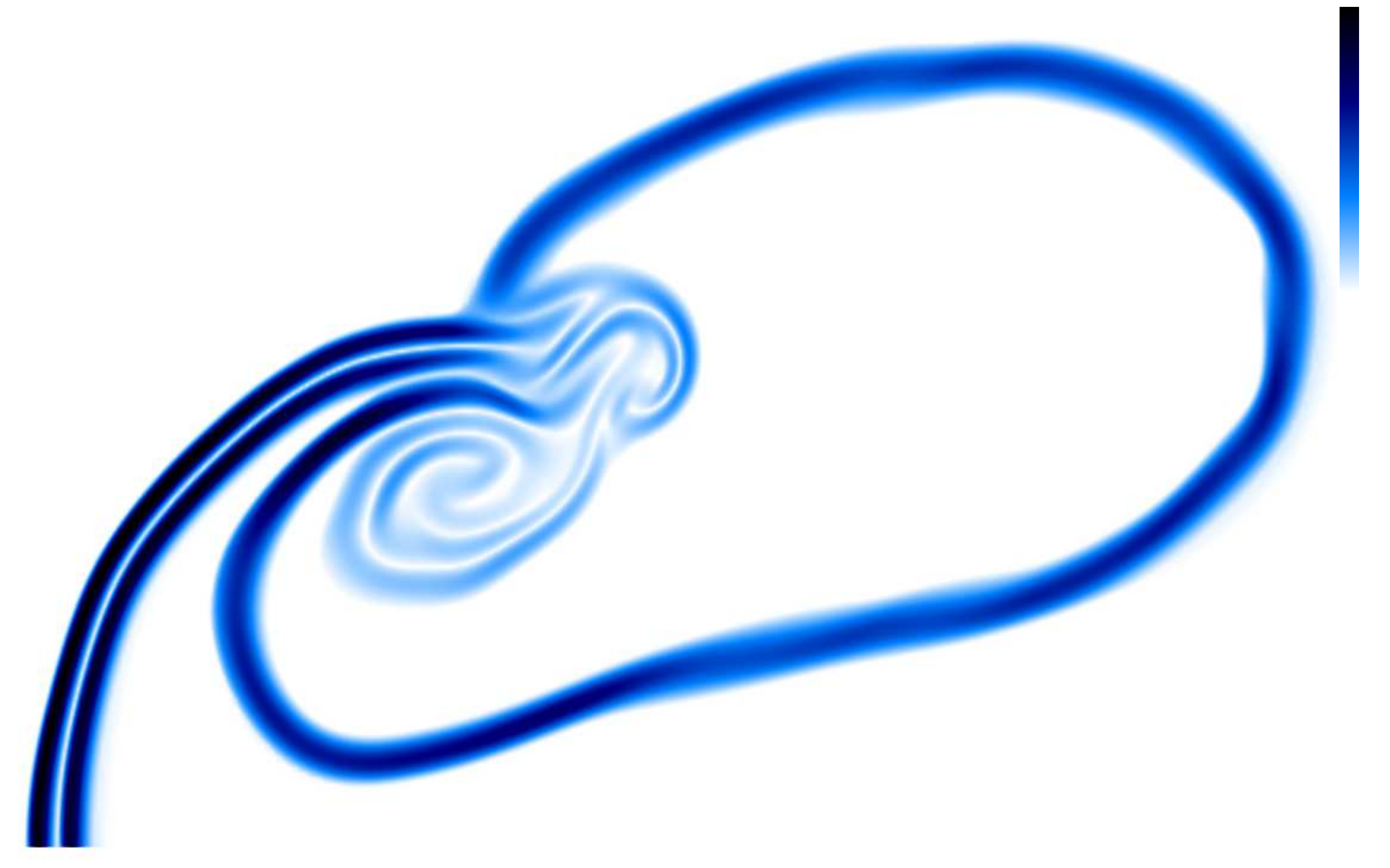}}\hspace{0.45cm}
    \subfloat[\centering Level 2, $h =50 \: \mu \text{m}$]{
    \includegraphics[width=0.25\textwidth]{ 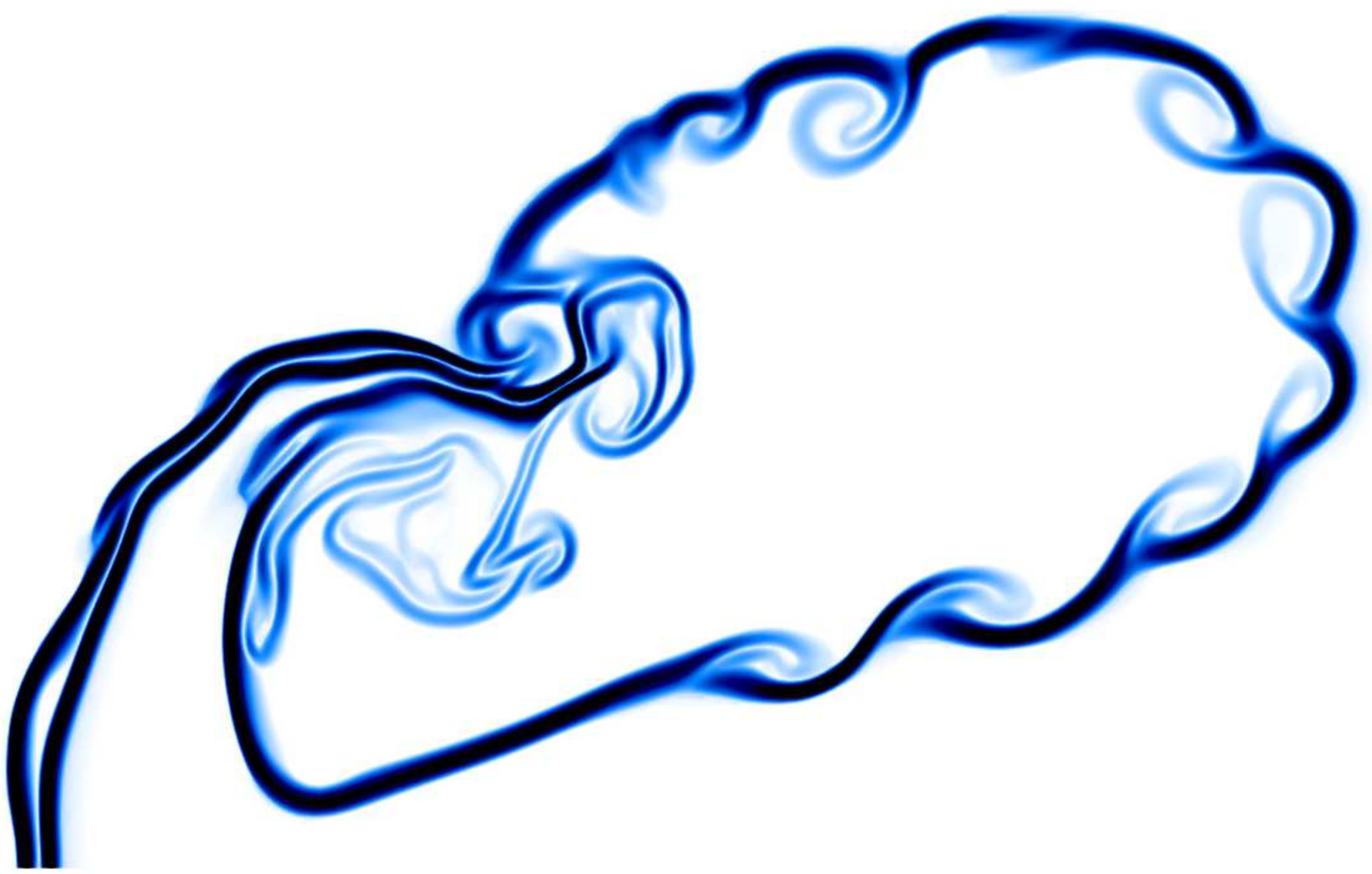}}\hspace{0.45cm } 
    \subfloat[\centering Level 3 , $h =25 \: \mu \text{m}$]{
    \includegraphics[width=0.25\textwidth]{ 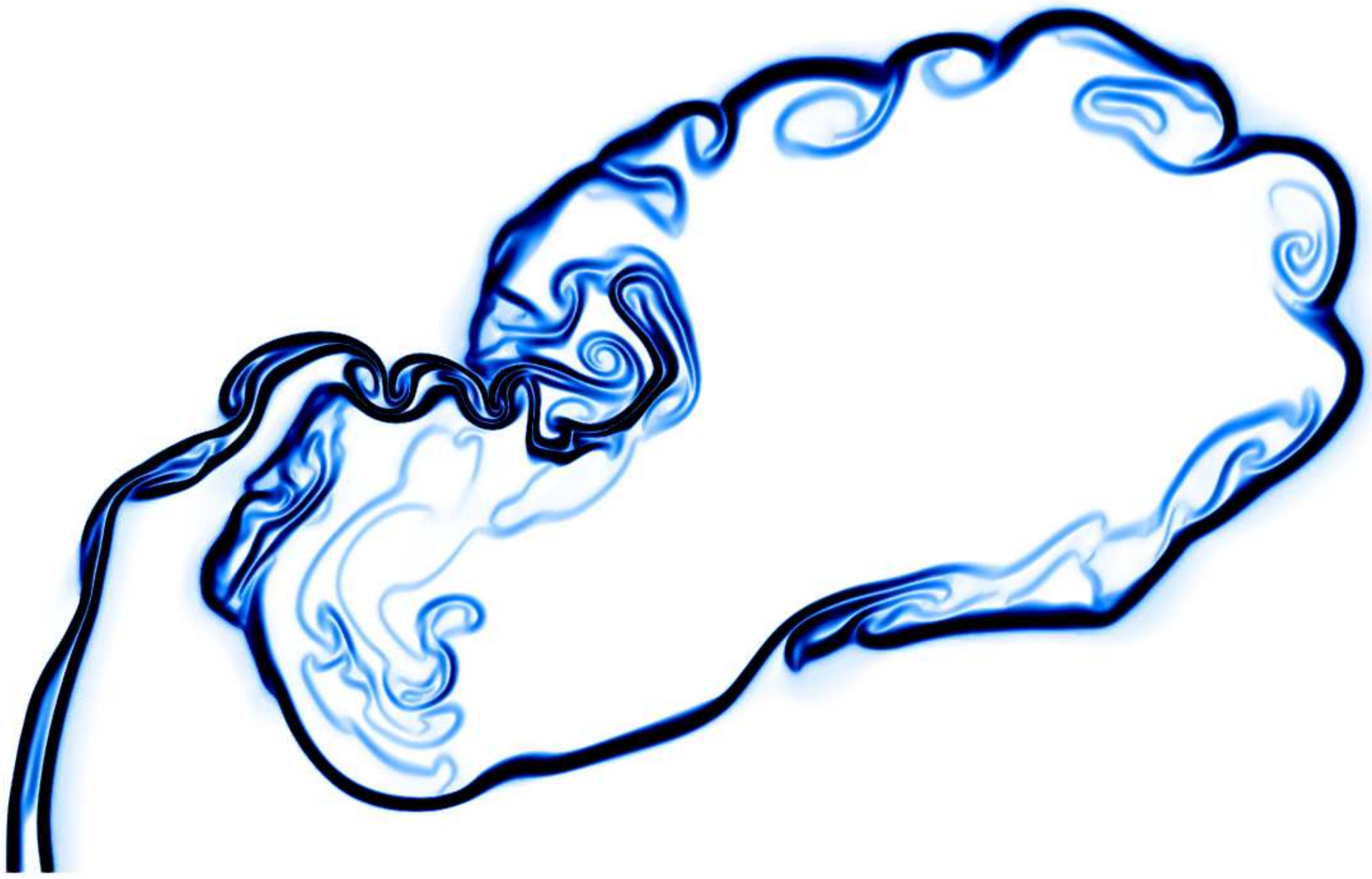}} 
    \caption{Sequence of local refinements for the two-dimensional Shock-bubble interaction problem. Time stamps from top to bottom: $t = 300 \: \mu \text{s}$, $t = 500 \: \mu \text{s}$, and $t = 700 \: \mu \text{s}$}
    \label{fig:local_refine}
\end{figure}

The influence of grid resolution on the density gradient at $t =300, 500$ and $ 700 \: \mu \text{s}$ is depicted in Figure~\ref{fig:local_refine}. Since this test involves fluid instabilities, achieving grid convergence is impossible until the computational cell size becomes significantly smaller than the Kolmogorov scale (approximately $2.5\: \mu \text{m}$). This remains impractical to achieve, even when utilizing AMR \cite{houim2011low}.
\clearpage
\subsection{Under-expanded hydrogen injection}  \label{res4}
The Entropy-Stable/Double-Flux scheme is evaluated through a supersonic hydrogen jet simulation. Hydrogen’s high-pressure injection creates shock waves, impacting fuel-air mixing and combustion. Simulating these flows is complex due to turbulence, acoustics, and shock interactions \cite{crist1966study,snedeker1971study,vuorinen2013large}.
To assess the numerical flux scheme’s behavior in multi-component jet flows, we adopted a benchmark similar to Hamzehloo et al. \cite{hamzehloo2014large} and conducted a simulation of an under-expanded hydrogen jet, following the experimental setup of Ruggles and Ekoto \cite{ruggles2012ignitability} with a nozzle pressure ratio (NPR) of 10. The simulation domain is a cylindrical region with a diameter of $30R_{in}$ and a length of $41R_{in}$, where $R_{in}= 1.5 \ $mm represents the nozzle diameter, see Figure \ref{fig:Jet_Testcase}.

We simplified the configuration by applying a Dirichlet inlet boundary condition instead of using a high-pressure reservoir. The inlet extends one nozzle diameter into the domain, while Euler slip walls are assigned to the surrounding boundaries, and supersonic outflow conditions are set for the remaining areas. The material parameters used in the three-dimensional under-expanded hydrogen injection simulation are provided in Table \ref{tab:initial_conditions_Jet}. The computational domain is structured using a structured hexahedral mesh, consisting of approximately 1.1 million cells at the start of the simulation. As a DNS of this configuration is unfeasible the Smagorinsky eddy viscosity model for the unresolved subgrid contribution is used to run the test case as a large eddy simulation (LES) to investigate the characteristics of hydrogen injection.

\begin{figure}
    \centering
    {{\includegraphics[width=0.99\textwidth]{ 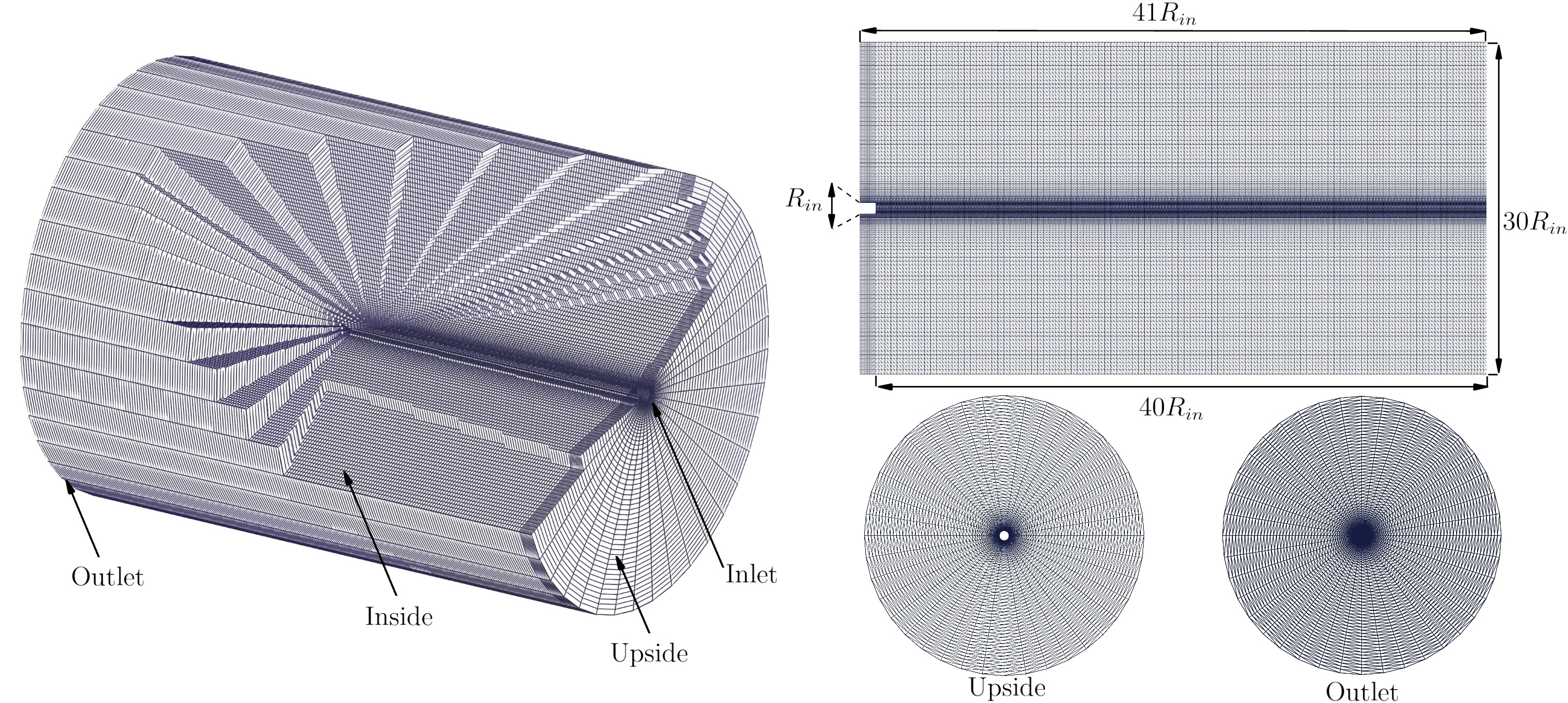} }}
    \caption{Computational domain for hydrogen injection. The right side presents the setup dimensions, while the left side visualizes a cut through the initial three-dimensional grid structure.}%
    \label{fig:Jet_Testcase}%
\end{figure}
\begin{table}[h]
\captionsetup{justification=raggedright, singlelinecheck=false}
    \caption{Initial conditions for the three-dimensional under-expanded Hydrogen injection problem.}
    \label{tab:initial_conditions_Jet}
\begin{tabular*}{\textwidth}{l@{\extracolsep{\fill}}lcc}
        \toprule
        Quantity   & Inlet & Inside  \\
        \midrule
        $\rho \: \:$ [kg/m$^3$]         & 0.5115    & 1.1579      \\
        $\text{v}_1$ [m/s]              & 1195.15   & 0.00        \\
        $\text{v}_2$ [m/s]              & 0         & 0            \\
        $\text{v}_3$ [m/s]              & 0         & 0            \\
        $p \:\:$ [bar]                  & 10        & 1            \\
        $T\:$ [K]                       & 245.6     & 296.0         \\
        $Y_{N_2}$                       & 0.000     & 0.215        \\
        $Y_{O_2}$                       & 0.000     & 0.785         \\
        $Y_{H_2}$                       & 1.000     & 0.000         \\        \bottomrule
    \end{tabular*}
\end{table}

\subsubsection{Simulation results and analysis}

The simulation runs until $t = 55 \: \mu \text{s}$ using $N_{\text{procs}} = 3840$ for both the Double-Flux with the HLLC flux function and Entropy-Stable/Double-Flux methods. To achieve higher resolution, we applied AMR with a second-layer local refinement, increasing the cell count by 33 million at $t = 55 \: \mu \text{s}$.  

The results align well qualitatively with those reported by \cite{hamzehloo2014large,ruggles2012ignitability}, as evident from the overall flow field comparison. This agreement is further supported by analyzing the position and size of the first Mach disk, as well as the jet penetration for NPR=10 at $t = 55 \: \mu \text{s} $, as presented in Table \ref{tab:var_Jet}, where a close qualitative match to the data in \cite{hamzehloo2014large,ruggles2012ignitability} is observed.

\begin{table}[h]
    \caption{Comparison of Mach disk dimensions and penetration depth for different studies and numerical methods in the three-dimensional under-expanded hydrogen injection problem for NPR=10 at $t = 55 \: \mu \text{s} $. AMR1 applies one refinement, while AMR2 applies two.}
    \label{tab:var_Jet}
\begin{tabular*}{\textwidth}{@{\extracolsep{\fill}}lccc}
        \toprule
        \multirow{2}{*}{Author(s)} & \multicolumn{2}{c}{Mach disk} & \multirow{2}{*}{Penetration [mm]}  \\
        \cmidrule(lr){2-3}
        & Height [mm] & Width [mm] &  \\
        \midrule
        Hamzehloo et al. \cite{hamzehloo2014large}          & 3.09   & 1.34 & 15.80      \\
        Ruggles and Ekoto \cite{ruggles2012ignitability}    & 3.05   & 1.30 & -        \\
        Double-Flux, AMR2                                   & 3.09   & 1.36 & 15.86         \\
        Entropy-Stable/Double-Flux, AMR1                    & 3.14   & 1.38 & 15.78         \\
        Entropy-Stable/Double-Flux, AMR2                    & 3.10   & 1.32 & 15.84         \\        
        \bottomrule
    \end{tabular*}
\end{table}

The initial transient stages of near-nozzle shock expansion, Mach disk formation, and jet development in the under-expanded hydrogen flow are shown in Figure \ref{fig:Hydrogen injection}. We can comparing two methods at time $t =25 \: \mu \text{s}$ and $t =55 \: \mu \text{s}$. Both methods predict a similar Mach disk position, indicating consistency in capturing the primary shock structures. However, the Entropy-Stable/Double-Flux method reduces numerical dissipation, allowing for a more accurate representation of flow features. In particular, it successfully captures the oblique shock with greater clarity, but with Double-Flux, oblique shock is very diffusive, especially at $t= 25 \: \mu \text{s}$. This behavior is clear for jet boundary and emitted sound in both times.

Hamzehloo et al. \cite{hamzehloo2014large} demonstrated that mixing occurs before the Mach disk. Where with hydrogen, high levels of momentum exchange and mixing were observed at the boundary of the intercepting shock. The high turbulence fluctuations at the nozzle exit of the hydrogen jet triggered Görtler vortices \cite{hamzehloo2014large}. As shown in Figure \ref{fig:Hydrogen injection}, at $t = 55 \: \mu \text{s}$, mixing did not occur before the Mach disk for the Double-Flux method. In contrast, the Entropy-Stable/Double-Flux method exhibited mixing before the Mach disk. For both methods, primary mixing was observed to take place after the Mach disk, particularly near the jet boundaries, where large-scale turbulence played a dominant role \cite{hamzehloo2014large}.

In addition, Figure \ref{fig:Hydrogen injection_velocity} presents a velocity field slice at $t = 55 \: \mu \text{s}$, comparing both methods. As shown in Figure \ref{fig:Hydrogen injection_velocity}, both methods capture shock waves, consistent with the findings of \cite{hamzehloo2014large,ruggles2012ignitability}. However, the Entropy-Stable/Double-Flux method demonstrates greater stability and remains free of oscillations beyond the Mach disk position. This improved stability is attributed to its enforcement of the second law of thermodynamics.

\begin{figure}
    \centering
    {{\includegraphics[width=0.9\textwidth]{ 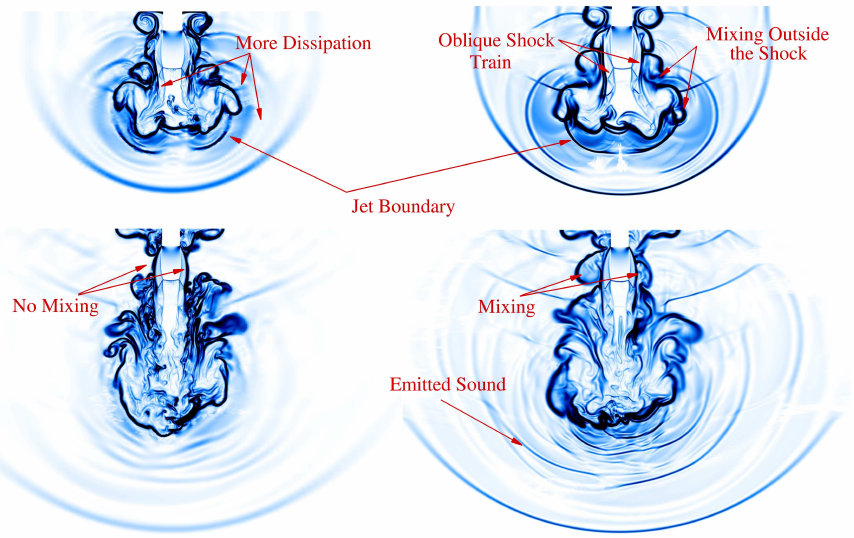} }}\\
    \caption{Comparison between the density gradient $\vert \nabla  \rho \vert $ of the Double-Flux scheme (left) and the Entropy-Stable/Double-Flux method (right). Time stamps from top to bottom: $t = 25 \: \mu \text{s}$ and $t = 55 \: \mu \text{s}$.}%
    \label{fig:Hydrogen injection}%
\end{figure}

\begin{figure}
    \centering
    {{\includegraphics[width=0.95\textwidth]{ 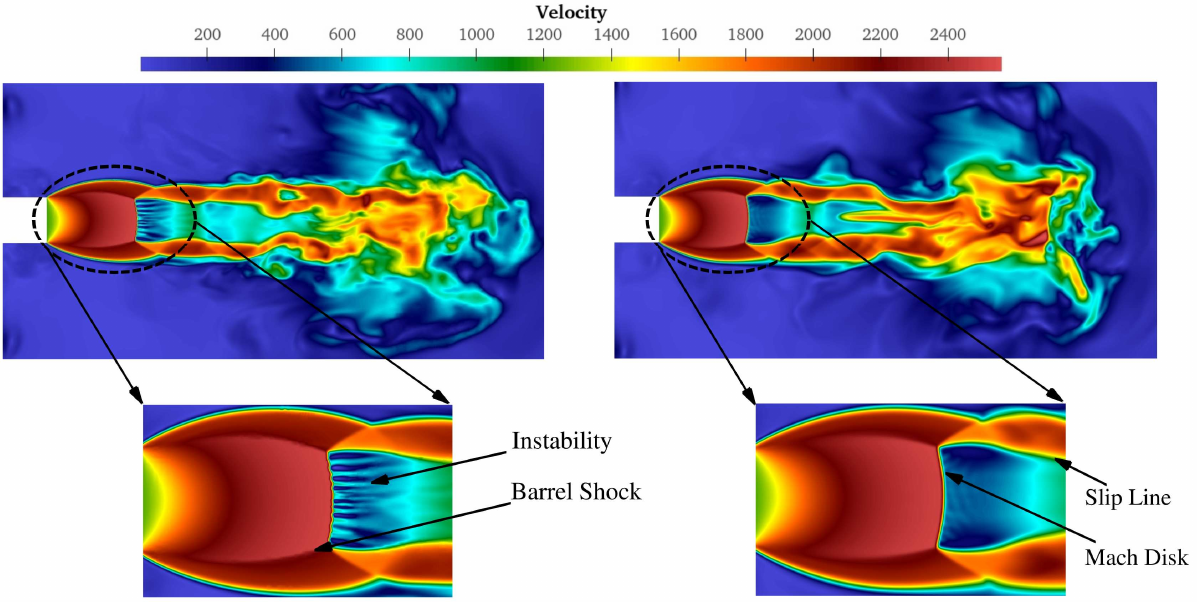} }}
    \caption{Two-dimension slice of the velocity field for under-expanded hydrogen injection into air at $t = 55 \: \mu \text{s}$, comparing the Double-Flux scheme (left) and the Entropy-Stable/Double-Flux method (right).}%
    \label{fig:Hydrogen injection_velocity}%
\end{figure}

\section{Conclusions and future work\label{sec:Conclusions and future work}}
This study addresses the numerical challenges associated with simulating multi-component compressible flows that involve strong species mixing, shock interactions, and turbulent dynamics. Such flows often give rise to non-physical oscillations in pressure and temperature due to discontinuities in molar mass and specific heat ratios. Moreover, ensuring consistency with the second law of thermodynamics and maintaining numerical stability in the presence of sharp gradients remains a significant challenge in conventional numerical frameworks.

To overcome these limitations, we have developed a new numerical flux formulation that combines the Entropy-Stable method with the Double-Flux approach within a finite volume discretization. This formulation is further enhanced by the inclusion of a hybrid dissipation mechanism, which improves robustness without compromising accuracy. The resulting method ensures entropy consistency, kinetic energy conservation, and oscillation-free behavior across species' interfaces. It also provides enhanced shock-capturing capabilities, particularly in flows with strong compressibility and mixing.

The proposed method was implemented in a density-based solver within the OpenFOAM environment. Its effectiveness was evaluated through a series of progressively challenging test cases designed to assess accuracy, stability, and physical fidelity.

The first test case examines a one-dimensional, single-component isentropic flow with a discontinuous initial profile. In this simplified scenario, the Entropy-Stable/Double-Flux formulation is adapted into an entropy-conserving scheme, demonstrating optimal convergence in entropy variation and confirming the method's theoretical consistency.

The second case considers a two-dimensional, multi-component flow with a moving material interface involving two distinct species. Conventional Entropy-Stable fluxes typically produce spurious oscillations in pressure and velocity at such interfaces. However, the proposed numerical flux successfully eliminates these artifacts, capturing the interface with high fidelity and without introducing non-physical behavior.

The third test case involves a two-dimensional shock–bubble interaction in a multi-component setting. This highly dynamic problem presents strong gradients and interactions between shocks and material interfaces. The Entropy-Stable/Double-Flux scheme closely matches reference solutions and demonstrates improved robustness and numerical stability, even under severe flow conditions.

In more complex, real-world scenarios, such as the simulation of a supersonic under-expanded hydrogen jet, the method shows an enhanced capability to resolve intricate shock structures, including Mach disks and barrel shocks, while maintaining stability and accuracy downstream of the shock front. Also, the simulation outcomes show strong agreement with both experimental observations and numerical benchmarks from the literature.

In summary, the Entropy-Stable/Double-Flux method offers a significant advancement in the numerical simulation of multi-component compressible flows, especially those involving strong species mixing, shocks, and turbulence. By unifying Entropy-Stable fluxes with Double-Flux mechanisms and augmenting them with a hybrid dissipation model, our method provides a robust, accurate, and thermodynamically consistent framework—paving the way for large-scale simulations, such as LES of turbulent reactive flows, in future work.

\section*{CRediT authorship contribution statement}
\textbf{Vahid Badrkhani:} Writing – review \& editing, Writing – original draft, Visualization, Validation, Software, Methodology, Investigation, Formal analysis, Data curation, Conceptualization. \textbf{T. Jeremy P. Karpowski:} Writing – review \& editing, Software, Data curation. \textbf{Christian Hasse:} Writing – review \& editing, Project administration, Funding acquisition, Conceptualization.

\section*{Declaration of competing interest}
The authors declare that they have no known competing financial interests or personal relationships that could have appeared to influence the work reported in this paper.

\section*{Acknowledgements}
This work has been funded by the Federal Ministry of Education and Research of the Federal Republic of Germany (BMBF) through the collaborative project CFD4H2, funding reference 03SF0640B, and the German Research Foundation (DFG) – Project no. 349537577. 

\section*{Data availability}
The data that support the findings of this study are available from the corresponding author upon reasonable request.
 \pagebreak
 
 \bibliographystyle{elsarticle-num}
\bibliography{sample}

\end{document}